\documentclass[american,aps,prl,reprint,longbibliography,superscriptaddress]{revtex4-2}
\pdfoutput=1
\usepackage[T1]{fontenc}
\usepackage[utf8]{inputenc}
\setcounter{secnumdepth}{3}
\usepackage{color}
\usepackage{babel}
\usepackage{amsmath}
\usepackage{amsthm}
\usepackage{amssymb}
\usepackage{physics}
\usepackage{graphicx}
\usepackage[unicode=true,pdfusetitle,
 bookmarks=true,bookmarksnumbered=false,bookmarksopen=false,
 breaklinks=false,pdfborder={0 0 0},pdfborderstyle={},backref=false,colorlinks=true]
 {hyperref}
\hypersetup{
 allcolors=magenta}

\makeatletter
\theoremstyle{plain}
\newtheorem{thm}{\protect\theoremname}
\theoremstyle{plain}

\theoremstyle{plain}

\ifx\proof\undefined

\providecommand{\proofname}{Proof}
\fi

\usepackage{times}
\usepackage{txfonts}
\usepackage{braket}
\usepackage{colortbl}

\makeatother

\providecommand{\propositionname}{Proposition}
\providecommand{\theoremname}{Theorem}
\providecommand{\lemmaname}{Lemma}

\begin{document}
\title{Catalytic and asymptotic equivalence for quantum entanglement}

\author{Ray Ganardi}
\email{r.ganardi@cent.uw.edu.pl}
\affiliation{Centre for Quantum Optical Technologies, Centre of New Technologies,
University of Warsaw, Banacha 2c, 02-097 Warsaw, Poland}
\author{Tulja Varun Kondra}
\affiliation{Centre for Quantum Optical Technologies, Centre of New Technologies,
University of Warsaw, Banacha 2c, 02-097 Warsaw, Poland}
\author{Alexander Streltsov}
\affiliation{Institute of Fundamental Technological Research, Polish Academy of Sciences, Pawińskiego 5B, 02-106 Warsaw, Poland}

\begin{abstract}
    Entanglement is a fundamental resource in quantum information processing, yet understanding its manipulation and transformation remains a challenge.  Many tasks rely on highly entangled pure states, but obtaining such states is often challenging due to the presence of noise. Typically, entanglement manipulation procedures involving asymptotically many copies of a state are considered to overcome this problem. These procedures allow for distilling highly entangled pure states from noisy states, which enables a wide range of applications, such as quantum teleportation and quantum cryptography. When it comes to manipulating entangled quantum systems on a single copy level, using entangled states as catalysts can significantly broaden the range of achievable transformations. Similar to the concept of catalysis in chemistry, the entangled catalyst is returned unchanged at the end of the state manipulation procedure. Our results demonstrate that despite the apparent conceptual differences between the asymptotic and catalytic settings, they are actually strongly connected and fully equivalent for all distillable states. Our methods rely on the analysis of many-copy entanglement manipulation procedures which may establish correlations between different copies. As an important consequence, we demonstrate that using an entangled catalyst cannot enhance the asymptotic singlet distillation rate of a distillable quantum state. Our findings provide a comprehensive understanding of the capabilities and limitations of both catalytic and asymptotic state transformations of entangled states, and highlight the importance of correlations in these processes.
\end{abstract}

\maketitle

\textbf{\emph{Introduction.}} Entanglement is a key feature of quantum mechanics, and it plays a vital role in many areas of quantum information science. Being a strong form of correlations between quantum systems, entanglement enables a wide range of applications and protocols that have the potential to revolutionize information processing and communication~\cite{PhysRevLett.70.1895,PhysRevLett.67.661}. The study of entanglement and its properties has led to significant advancements in our understanding of quantum mechanics, and it has provided insights into how to manipulate and harness its power for practical applications~\cite{HorodeckiRevModPhys.81.865}.

To understand the pivotal role of entanglement as a resource in quantum information processing, we can consider the distant lab paradigm~\cite{BennettPhysRevLett.76.722,BennettPhysRevA.54.3824}.  This scenario involves two parties, Alice and Bob, who are located in different quantum laboratories and can exchange classical messages to communicate with each other. In this setting, entangled states shared between Alice and Bob become a valuable resource, allowing them to perform tasks that would otherwise be impossible~\cite{HorodeckiRevModPhys.81.865}. 

One of the most significant applications of entanglement is in the field of quantum communication, including quantum teleportation~\cite{PhysRevLett.70.1895} and quantum cryptography~\cite{PhysRevLett.67.661}. These tasks typically rely on singlets, which are pure highly entangled states of two qubits. However, in practice, Alice and Bob may only have access to noisy states. In order to use noisy states for singlet-based protocols, they can employ entanglement distillation~\cite{BennettPhysRevLett.76.722,BennettPhysRevA.53.2046}, which is a special case of asymptotic state transformations. In this process, $n$ copies of an initial state are transformed approximately into $rn$ copies of the final state, where $r$ is the transformation rate. Quantum states which can be distilled into singlets at a nonzero rate are called distillable. There exist noisy entangled states which cannot be distilled into singlets, a phenomenon known as bound entanglement~\cite{PhysRevLett.80.5239}.

Another way how Alice and Bob can gain access to singlets from noisy states is to use entanglement catalysis. In this process, an auxiliary entangled state, known as a catalyst, is employed to aid in the transformation of one entangled state to another without altering the catalyst itself~\cite{JonathanPhysRevLett.83.3566}. Recent work~\cite{PhysRevLett.127.150503,Lipka-Bartosik2102.11846,Rubboli2111.13356,datta2022entanglement} extended this idea to approximate catalysis, where the transformation can be achieved with a certain degree of inaccuracy. This concept has proven to be instrumental in advancing our understanding of catalytic entanglement manipulation and its potential applications~\cite{catalysis_review}.

At first glance, catalytic and asymptotic transformations may seem like distinct concepts, but recent research has uncovered a strong connection between them. Initial evidence for a connection between these concepts was presented in~\cite{PhysRevA.72.024306,DuanPhysRevA.71.042319}, and subsequent work has made significant progress in this direction, particularly through the use of approximate catalysis~\cite{PhysRevLett.127.150503,Rubboli2111.13356}. Furthermore, it has been shown that in quantum thermodynamics, catalysis and many-copy transformations with a unit rate are fully equivalent~\cite{shiraishi2020quantum,wilming2020entropy}. Given the shared features between quantum entanglement and thermodynamics~\cite{PhysRevLett.89.240403,Popescu2006,Brandao2008,Lami2023}, it is plausible that a similar equivalence may exist in entanglement theory.

In this article, we resolve this question by considering catalytic and asymptotic protocols which can establish a non-vanishing amount of correlations. This provides a more flexible approach for studying catalysis and asymptotic transformations and their applications in quantum information processing. In this setting, we prove that for distillable states, catalysis and asymptotic transformations with unit rate are fully equivalent notions of entangled state manipulation. We discuss several applications of our results, including the crucial finding that the addition of a catalyst cannot increase the distillable entanglement of a noisy distillable state.

\textbf{\emph{Asymptotic entanglement manipulations and catalysis.}} As previously discussed, asymptotic transformations are a powerful tool for understanding the structure and manipulation of quantum entanglement. For instance, consider two bipartite pure states $\ket{\psi}$ and $\ket{\phi}$. The objective is to use local operations and classical communication (LOCC) to transform $n$ copies of $\ket{\psi}$ into $m$ copies of $\ket{\phi}$, allowing for an error margin that vanishes in the limit of large $n$. The maximal ratio $m/n$ defines the transformation rate. This framework is particularly useful if the target is the singlet state $\ket{\psi^-} = (\ket{01}- \ket{10})/\sqrt{2}$, in which case the optimal rate is known as distillable entanglement~\cite{BennettPhysRevLett.76.722,BennettPhysRevA.53.2046,PlenioMeasures}. It coincides with the entanglement entropy $E(\ket{\psi}) = S(\psi^A)$ of the initial state, and $S(\rho) = -\mathrm{Tr[\rho \log_2 \rho]}$ is the von Neumann entropy~\cite{BennettPhysRevA.53.2046}. In a similar way, it is possible to define transformation rates for noisy states, we refer to the Supplemental Material~\footnote{See Supplemental Material for technical explanations and more details, which contains the additional references~\cite{Eggeling_2001,959270,Devetak2005,Matthews_2008,PhysRevA.65.012107,Palazuelos2022,PhysRevA.75.012305,PhysRevLett.102.170503,Christandl_2004,Alicki_2004,RevModPhys.91.025001}} for more details. A state is called asymptotically reducible onto another state if the transformation can be achieved with a rate at least one~\cite{PhysRevA.63.012307}.
This reflects the intuition that if $\ket{\psi}$ is reducible onto $\ket{\phi}$, then $\ket{\psi}$ is at least as valuable as $\ket{\phi}$ for any application that allows for asymptotic transformations.

Entanglement catalysis is a phenomenon where an entangled catalyst is used to facilitate the single-copy transformation of an entangled state into another without changing the state of the catalyst~\cite{JonathanPhysRevLett.83.3566,EisertPhysRevLett.85.437}. Given an entangled state $\ket{\psi}$ and a target state $\ket{\phi}$, the aim is to find a catalytic state $|\eta\rangle$ such that the transformation $\ket{\psi} \otimes \ket{\eta} \rightarrow \ket{\phi} \otimes \ket{\eta}$ is possible by LOCC. The catalyst is particularly useful if it enables the transformation of $\ket{\psi}$ into $\ket{\phi}$ which is not possible without the catalyst. Recently, the notion of catalysis has been extended to approximate catalysis in~\cite{PhysRevLett.127.150503,Lipka-Bartosik2102.11846,Rubboli2111.13356,datta2022entanglement}, which allows for some degree of inaccuracy in the catalytic transformation. The notion of approximate catalysis provides a more realistic model for practical implementations of catalytic entanglement manipulation and enables a broader range of applications~\cite{catalysis_review}. It has been demonstrated in~\cite{PhysRevLett.127.150503} that transformations between bipartite pure states in this scenario are fully determined by the entanglement entropy of the corresponding states. Therefore, for bipartite pure states approximate catalysis is fully equivalent to reducibility. Catalytic phenomena have been extensively studied not only in the context of entanglement, but also in other areas of quantum physics, such as quantum thermodynamics~\cite{Brandao3275,PhysRevX.8.041051,shiraishi2020quantum,BoesPhysRevLett.122.210402,wilming2020entropy}, where they are essential for understanding and manipulating quantum systems subject to constraints imposed by energy conservation.

As has been shown in~\cite{PhysRevLett.127.150503}, there is a close connection between asymptotic state transformations and catalysis. More precisely, if a state $\rho$ is asymptotically reducible to another state $\sigma$, then a transformation from $\rho$ into $\sigma$ can also be achieved on the single-copy level with approximate catalysis~\cite{PhysRevLett.127.150503}. However, it has remained a crucial open question whether the converse is also true, i.e., whether catalysis and asymptotic reducibility are fully equivalent notions for entangled state transformations. In this article, we introduce the frameworks of marginal asymptotic transformations and correlated catalysis, which allows us to resolve this question and establish the equivalence between catalysis and reducibility for all distillable quantum states.

\begin{figure*} 
\includegraphics[width=1\textwidth]{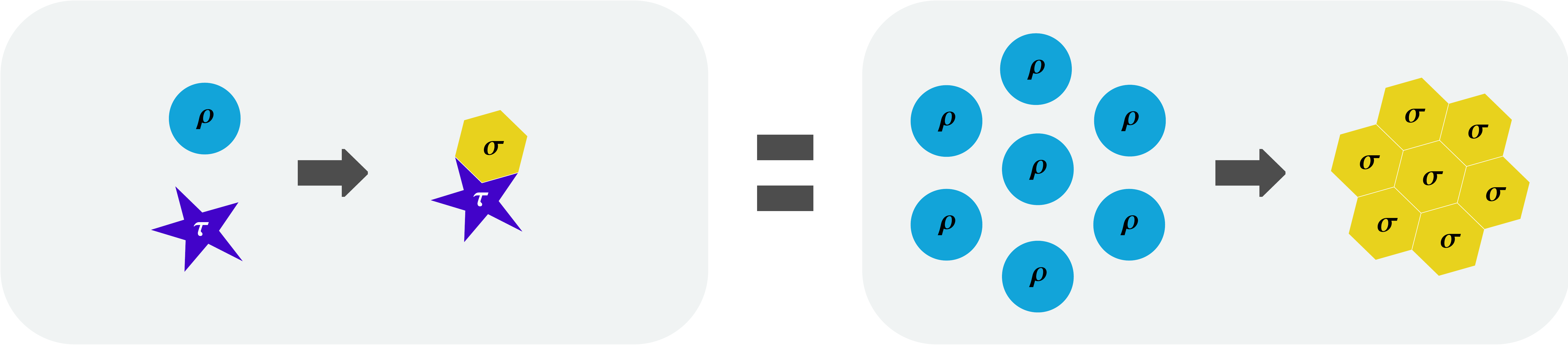}

\caption{\label{fig:Main} Equivalence between catalysis and reducibility. The left part of the figure shows a state $\rho$ being converted into another state $\sigma$ through a catalytic transformation by using a catalyst in the state $\tau$. The right part of the figure shows that $\rho$ is asymptotically reducible into $\sigma$. Our results demonstrate that the two processes are equivalent for any pair of distillable states, assuming that both procedures can establish correlations.}
\end{figure*}

\textbf{\emph{Correlated catalysis and marginal reducibility.}} In the context of entanglement catalysis, an important generalization is to consider \emph{correlated catalysis}, where the catalyst is allowed to have non-vanishing correlations with the system throughout the transformation process. This means that the system and the catalyst can remain correlated in the final state. More precisely, we say that $\rho$ can be converted into $\sigma$ via correlated catalysis if for any error margin $\varepsilon > 0$ there is an LOCC protocol $\Lambda$ and a catalyst state $\tau$ such that
\begin{equation} \label{eq:CorrelatedCatalysis}
\begin{aligned}\mu^{SC} & =\Lambda(\rho^{S}\otimes\tau^{C}),\\
||\mu^{S}-\sigma^{S}||_{1} & <\varepsilon,\,\,\,\mu^{C}=\tau^{C}.
\end{aligned}
\end{equation}
Here, $S$ denotes a possibly multipartite system, $C$ denotes the catalyst, and $||M||_1 = \mathrm{Tr}\sqrt{M^\dagger M}$ is the trace norm. In other words, the state $\mu^{SC}$ is obtained by applying an LOCC protocol $\Lambda$ to the state $\rho^{S}\otimes\tau^{C}$, such that the marginal on $C$ is preserved and the resulting state on $S$ can be made arbitrarily close to the target state $\sigma$. Previous studies in quantum thermodynamics have explored the significance of correlations for catalytic state transformations, revealing that the presence of correlations between the system and catalyst can increase the transformation power of the procedure~\cite{Wilminge19060241,BoesPhysRevLett.122.210402,wilming2020entropy,shiraishi2020quantum}.

We now introduce the notion of \emph{marginal reducibility}.
We say that $\rho$ can be reduced into $\sigma$ in the marginals if for any arbitrarily small error margin, there exists an LOCC protocol which can transform $n$ copies of $\rho$ into a state with approximately $m$ marginals, and each marginal being close to the desired state $\sigma$. Specifically, we require for any $\varepsilon, \delta > 0$ there exists an LOCC protocol $\Lambda$ and integers $m \leq n$ such that the following conditions hold for all $i \leq m$:
\begin{align} \label{eq:CorrelatedReducibility}
\Lambda\left(\rho^{\otimes n}\right) & = \mu_{m},\nonumber \\
\left\Vert \mu_{m}^{(i)}-\sigma\right\Vert _{1} & <\varepsilon,\\
\frac{m}{n}+\delta & >1. \nonumber
\end{align}
Here, $\mu_m$ is a state on $m$ subsystems, each shared by Alice and Bob, and $\mu_{m}^{(i)}$ is the reduced state of $\mu_m$ on $i$-th subsystem. Marginal asymptotic transformations have been previously studied in continuous variable systems in~\cite{Ferrari2023}.

It is worth to discuss the difference between marginal reducibility and the notion of reducibility introduced in~\cite{PhysRevA.63.012307}. The latter is more stringent as it requires that the final state $\mu_m$ is close to $m$ copies of $\sigma$ as a whole. However, for many quantum information processing tasks that rely on pure states $\ket{\phi}$, such as singlets in the bipartite case or GHZ states in the multipartite case, small perturbations of the state do not significantly affect its usefulness. In other words, the state $\mu_{\varepsilon}=(1-\varepsilon)\ket{\phi}\!\bra{\phi}+\varepsilon\openone/d$ is also useful for small enough $\varepsilon$. For marginal reducibility, it suffices that $\rho^{\otimes n}$ can be approximately converted into $\mu_\varepsilon^{\otimes n}$ for any $\varepsilon > 0$, which as we have argued above is enough for many tasks based on pure states. Therefore, we suggest that the framework of marginal reducibility is particularly suitable when one aims to produce pure entangled states of high quality that are intended to be used independently.

In the remainder of this article we will focus on the relationship between correlated catalysis and marginal reducibility. Unless otherwise specified, we will refer to these concepts simply as catalysis and reducibility, respectively.

\smallskip

\textbf{\emph{Catalysis-reducibility equivalence.}} As previously noted, there have been indications that catalysis and reducibility are interchangeable concepts for entangled state transformations~\cite{PhysRevA.72.024306,PhysRevLett.127.150503,Rubboli2111.13356,DuanPhysRevA.71.042319}.
The key contribution of this article is to establish this equivalence for any pair of distillable states, using the notions of catalysis and reducibility which include correlations, as outlined above in this article. We recall that a distillable state is a quantum state which can be converted into singlets at nonzero rate in the asymptotic limit.
\begin{thm} \label{thm:Main}
For any pair of distillable states $\rho$ and $\sigma$ reducibility and catalysis are fully equivalent.
\end{thm}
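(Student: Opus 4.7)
For the direction ``catalysis $\Rightarrow$ marginal reducibility,'' my plan is to iterate a given correlated catalytic protocol $\Lambda$ with catalyst $\tau$ sequentially on $n$ fresh copies of $\rho$ placed on registers $S_{1},\dots,S_{n}$, reusing the same shared catalyst $C$ throughout. A short induction based on \eqref{eq:CorrelatedCatalysis} shows that tracing out the previously produced systems returns the catalyst's marginal to $\tau$, so the input to the $i$-th step is exactly $\rho^{S_{i}}\otimes\tau^{C}$; the output marginal on $S_{i}$ is therefore $\varepsilon$-close to $\sigma$ and the catalyst marginal is preserved exactly at every step. Since $\tau$ has fixed finite dimension it can be LOCC-prepared to trace-distance $\varepsilon$ from a constant number of singlets, and by the distillability of $\rho$ these come from $o(n)$ additional copies of $\rho$, giving marginal reducibility with rate $m/n\to 1$.

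For the converse direction, ``marginal reducibility $\Rightarrow$ catalysis,'' I would adapt the cyclic ``clock'' construction used by Shiraishi--Sagawa and Wilming--Gallego--Eisert for the thermodynamic equivalence to the LOCC setting. Given a batch protocol $\Lambda_{n}\colon\rho^{\otimes n}\to\mu_{m}$ satisfying \eqref{eq:CorrelatedReducibility}, the catalyst is the classical mixture $\tau=n^{-1}\sum_{k=1}^{n}\ket{k}\!\bra{k}_{K}\otimes\alpha_{k}$, where a register $K$ shared by Alice and Bob indexes states $\alpha_{k}$ that hold $n-k$ fresh copies of $\rho$ in ``unprocessed'' slots and a marginal of $\mu_{m}$ on the $k-1$ ``processed'' slots. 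The single-shot protocol reads $k$: for $k\geq 2$ it locally swaps the system's $\rho$ with the frontier catalyst slot (so the output marginal on $S$ is a marginal of $\mu_{m}$, hence close to $\sigma$) and sets $k\to k-1$; for $k=1$ it applies $\Lambda_{n}$ jointly to the system together with all $n-1$ catalyst slots, assigns one output to $S$ and the remaining $m-1$ outputs to the processed catalyst slots, and sets $k\to n$. Averaging over the uniform distribution of $K$ then returns the catalyst marginal to its original form while leaving the system marginal close to $\sigma$.

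The main obstacle is the wrap-around branch at $k=1$: the batch protocol delivers only $m<n$ marginals, yet the catalyst must be restored to a configuration with $n-1$ processed slots. Resolving this mismatch is exactly where both states need to be distillable. I would keep a small reservoir of extra copies of $\rho$ stashed inside the catalyst during the swap phases; at the reset, the protocol distills singlets from this reservoir and from a few of the freshly produced $\sigma$-marginals, and then LOCC-prepares from these singlets the $n-m$ missing approximate-$\sigma$ slots together with the next cycle's $\rho$ reservoir. Because $m/n\to 1$, the reservoir size grows only sublinearly in $n$, so the catalyst dimension stays controlled and the trace-distance error on the catalyst marginal can be made arbitrarily small by taking $n$ large. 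The bulk of the technical work is then a careful propagation of trace-norm errors through the swap, reset, and refresh branches and a verification that averaging over $K$ returns the catalyst exactly while leaving the system $\varepsilon$-close to $\sigma$.
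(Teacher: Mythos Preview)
Your catalysis $\Rightarrow$ reducibility direction is essentially the paper's Proposition~4: prepare an $\varepsilon$-approximation $\tau_\varepsilon$ of the catalyst from $k$ copies of the distillable $\rho$, then iterate $\Lambda$ on the remaining copies. One point you gloss over: once you replace $\tau$ by $\tau_\varepsilon$, the input to step $i$ is no longer \emph{exactly} $\rho\otimes\tau$, so you must check that the error does not accumulate. The paper observes that contractivity of $\Lambda$ plus $\mu^C=\tau^C$ gives $\|\mu_i^C-\tau\|_1<\varepsilon$ for every $i$, so the approximation quality is maintained throughout---add this one line and your argument is complete.

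For reducibility $\Rightarrow$ catalysis you have the right clock construction but manufacture a difficulty that is not there. The ``wrap-around mismatch'' you describe arises only because you insist the processed slots carry marginals of the $m$-register state $\mu_m$. The paper instead lets the batch output $\Gamma=\Lambda(\rho^{\otimes n})$ live on all $n$ registers (pad with any fixed state on the last $n-m$ if your $\Lambda$ literally outputs $m$ registers). With the catalyst
\[
\tau=\frac{1}{n}\sum_{k=1}^{n}\rho^{\otimes(k-1)}\otimes\Gamma_{n-k}\otimes\ket{k}\!\bra{k},
\]
the reset branch at $k=n$ applies $\Lambda$ to the $n-1$ unprocessed catalyst slots together with the system's $\rho$ and produces \emph{exactly} $\Gamma$, so the catalyst marginal is returned \emph{exactly}, with no reservoirs, no distillation, and no approximation. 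The only price is that the system marginal $\mu^S=\frac{1}{n}\sum_k\Gamma_k^{(k)}$ picks up the ``bad'' marginals $k>m$, contributing at most $2(n-m)/n<2\delta$ to $\|\mu^S-\sigma\|_1$. Hence this direction needs \emph{no} distillability of either state (indeed the paper's Proposition~3 holds for arbitrary $\rho,\sigma$), contrary to your claim that ``resolving this mismatch is exactly where both states need to be distillable.'' Your reservoir-and-refresh scheme is not wrong in spirit, but it is much harder to make exact ($\mu^C=\tau^C$ must hold on the nose) and entirely avoidable.
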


\noindent We present a brief overview of the techniques employed for proving the theorem, more details can be found in the Supplemental Material. Firstly, we demonstrate that if the state $\rho$ can be reduced to the state $\sigma$, then it is possible to achieve a catalytic transformation from $\rho$ to $\sigma$, using techniques similar to those presented in prior works~\cite{shiraishi2020quantum,PhysRevLett.127.150503,Lipka-Bartosik2102.11846,Rubboli2111.13356}. Subsequently, we establish the converse by explicitly constructing a reduction protocol that utilizes a catalytic conversion protocol from a distillable state $\rho$ into $\sigma$. This involves several technical steps that are described in detail in the Supplemental Material. By combining these two results, we conclusively demonstrate the full equivalence of reducibility and catalysis for any pair of distillable states $\rho$ and $\sigma$, see also Fig.~\ref{fig:Main}.

Since all entangled two-qubit states are distillable~\cite{HorodeckiPhysRevLett.78.574}, Theorem~\ref{thm:Main} implies that catalysis and reducibility are fully equivalent for all two-qubit states.
For states beyond two qubits, Theorem~\ref{thm:Main} also applies if the target state $\sigma$ is not distillable. Moreover, catalysis is generally at least as powerful as reducibility. With this in mind, Theorem~\ref{thm:Main} leaves open the possibility that there exist bound entangled states $\rho$ that cannot be reduced to some state $\sigma$, yet a catalytic conversion from $\rho$ to $\sigma$ is possible. Thus, if catalysis and reducibility are not equally powerful on all quantum states, catalysis must show an advantage on some bound entangled initial states. This underscores the importance of investigating the relationship between these concepts in the general case, as it can provide insights into the nature of bound entanglement and the power of entanglement catalysis. Additionally, our findings can have implications for quantum information processing tasks where bound entangled states are known to play a significant role~\cite{PhysRevLett.82.1056,PhysRevLett.86.2681,Vertesi2014}.

Going one step further, we investigate the role of catalysis for asymptotic transformation rates. Our findings reveal that the addition of a catalyst does not alter the asymptotic rate of transformation from a distillable state $\rho$ into another state $\sigma$, again under the assumption that correlations can be established in the procedure. An important application of this result pertains to the scenario where the target state is a singlet $\ket{\psi^-}$. In this context, our analysis reveals that the correlations, which are typically established in the catalytic and asymptotic procedures considered earlier, vanish. This property allows us to explore the features of distillable entanglement when a catalyst is incorporated into the transformation, bringing us to the second main result of this article. 

\begin{thm} \label{thm:DistillableEntanglement}
    Catalysis cannot increase the distillable entanglement of a distillable state.
\end{thm}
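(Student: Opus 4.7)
The plan is to prove that the catalytic distillable entanglement $E_{D}^{\mathrm{cat}}(\rho)$ coincides with the ordinary distillable entanglement $E_{D}(\rho)$ for any distillable $\rho$. One direction, $E_{D}^{\mathrm{cat}}(\rho)\geq E_{D}(\rho)$, is immediate by using a trivial catalyst, so the entire content lies in the reverse inequality $E_{D}^{\mathrm{cat}}(\rho)\leq E_{D}(\rho)$.

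Fix any achievable catalytic rate $R<E_{D}^{\mathrm{cat}}(\rho)$ and an error $\varepsilon>0$. By the definition of correlated-catalytic distillation there exist integers $n$ and $m\geq Rn$, a catalyst state $\tau$, and an LOCC map $\Lambda$ with $\Lambda(\rho^{\otimes n}\otimes\tau)=\mu^{SC}$, $\mu^{C}=\tau$, and $\|\mu^{S}-\ket{\psi^-}\!\bra{\psi^-}^{\otimes m}\|_{1}<\varepsilon$. I would read this as a single correlated-catalytic conversion between the distillable states $\rho'=\rho^{\otimes n}$ and $\sigma'=\ket{\psi^-}^{\otimes m}$, and then invoke Theorem~\ref{thm:Main} in its catalysis$\Rightarrow$reducibility direction. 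This produces, for every $\varepsilon',\delta>0$, an LOCC map that takes $(\rho')^{\otimes k}=\rho^{\otimes nk}$ to a state $\nu_{k'}$ on $k'\leq k$ bipartite subsystems with $k'/k>1-\delta$ and each marginal $\nu_{k'}^{(i)}$ within trace distance $\varepsilon'$ of $\sigma'$. At this stage the protocol is LOCC-only and no catalyst appears.

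The crucial remaining step exploits the purity of the target. Any extension of a state that is $\varepsilon'$-close to a pure state is itself $O(\sqrt{\varepsilon'})$-close to the tensor product of its marginals, so a telescoping triangle-inequality argument over the $k'$ output sites shows that $\nu_{k'}$ is within $O(k'\sqrt{\varepsilon'})$ of the global product $\ket{\psi^-}^{\otimes mk'}$. Choosing $\varepsilon'$ small enough relative to $k'$ drives this residual error to zero; we have then distilled $mk'$ singlets from $nk$ copies of $\rho$ using only LOCC, at rate $mk'/(nk)\geq R(1-\delta)$. Sending $\delta\to 0$ yields $E_{D}(\rho)\geq R$ for every $R<E_{D}^{\mathrm{cat}}(\rho)$, which is the theorem.

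The main obstacle is precisely this last step: marginal reducibility only controls the output site by site and in principle allows correlations among the $k'$ copies, whereas what we need is a genuine tensor-product of singlets. The quantitative fact that almost-pure marginals rule out correlations is what the purity of the singlet target buys us, but it comes with an error scaling linearly in $k'$, so the error budget supplied by Theorem~\ref{thm:Main} must be calibrated against the number of output subsystems. Everything else is a clean reduction: Theorem~\ref{thm:Main} trades the catalyst for a correlated many-copy LOCC protocol, and purity of the target removes the correlations, leaving a standard distillation at the catalytic rate.
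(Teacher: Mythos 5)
Your overall strategy---trade the catalyst for a correlated many-copy LOCC protocol using distillability, then exploit purity of the singlet target to remove the residual correlations---matches the architecture of the paper's proof, and your first two steps essentially reproduce Propositions~\ref{prop:Distillable} and~\ref{prop:Rates} applied blockwise. The gap is in your final step. The telescoping bound gives $\|\nu_{k'}-(\psi^-)^{\otimes mk'}\|_1 = O(k'\sqrt{\varepsilon+\varepsilon'})$, where $\varepsilon$ is the error of the catalytic protocol you fixed at the outset and $\varepsilon'$ is the controllable catalyst-approximation error. You claim this can be driven to zero by shrinking $\varepsilon'$, but the $\sqrt{\varepsilon}$ contribution survives: no output block's marginal is closer to $(\psi^-)^{\otimes m}$ than the original catalytic error allows. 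Meanwhile $k'$ cannot stay bounded, because a fixed number $k-k'$ of blocks is sacrificed to synthesize the catalyst, so keeping the rate deficit below $\tilde\delta$ forces $k'\gtrsim (k-k')/\tilde\delta\to\infty$. Hence the accumulated error $k'\sqrt{\varepsilon}$ blows up precisely in the regime where the rate is preserved. Restarting with a smaller catalytic error $\varepsilon$ does not obviously repair this: the catalyst $\tau$, and therefore the synthesis cost $k-k'$, changes with $\varepsilon$ in an uncontrolled way, and you would need $(k-k')(\varepsilon)\cdot\sqrt{\varepsilon}\to 0$, which nothing guarantees. (A minor additional point: invoking Theorem~\ref{thm:Main} for the pair $(\rho^{\otimes n},(\psi^-)^{\otimes m})$ is not literally licensed, since correlated catalysis requires achievability for every error with a fixed target, whereas you have a single $\varepsilon$ for each $(n,m)$; the proof of Proposition~\ref{prop:Distillable} does go through with one fixed catalytic error, but this should be said.)

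The paper sidesteps the accumulation problem by never asking for global trace-norm closeness to a tensor product of singlets. Instead, Propositions~\ref{prop:PureTarget} and~\ref{prop:EdSuperadditive} evaluate the distillable entanglement on the correlated output and use superadditivity, $E_{\mathrm d}(\mu^{S_1\ldots S_m})\geq\sum_i E_{\mathrm d}(\mu^{S_i})$, together with continuity of $E_{\mathrm d}$ near pure states (via the hashing bound), to conclude $nE_{\mathrm d}(\rho)\geq m(S(\phi^A)-\varepsilon)$ directly; the chain $R\leq R_{\mathrm c}\leq R_{\mathrm{mc}}=R_{\mathrm m}=R$ then closes the argument. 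Correlations among output blocks are absorbed by an entanglement measure rather than removed in trace norm, so no error accumulates with the number of blocks. If you replace your telescoping step with this superadditivity argument, your reduction goes through.
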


\noindent The proof of the theorem combines the previously mentioned results on asymptotic transformation rates with the additional finding that correlations usually established in the involved procedures disappear if the target state is pure. We refer to the Supplemental Material for the proof and more details. Recalling that all entangled two-qubit states can be distilled into singlets~\cite{HorodeckiPhysRevLett.78.574}, it follows that Theorem~\ref{thm:DistillableEntanglement} applies to all two-qubit states. In general, our results leave open the possibility that bound entangled states could be activated into singlets through catalysis. 

\begin{figure} 
\includegraphics[width=0.66\columnwidth]{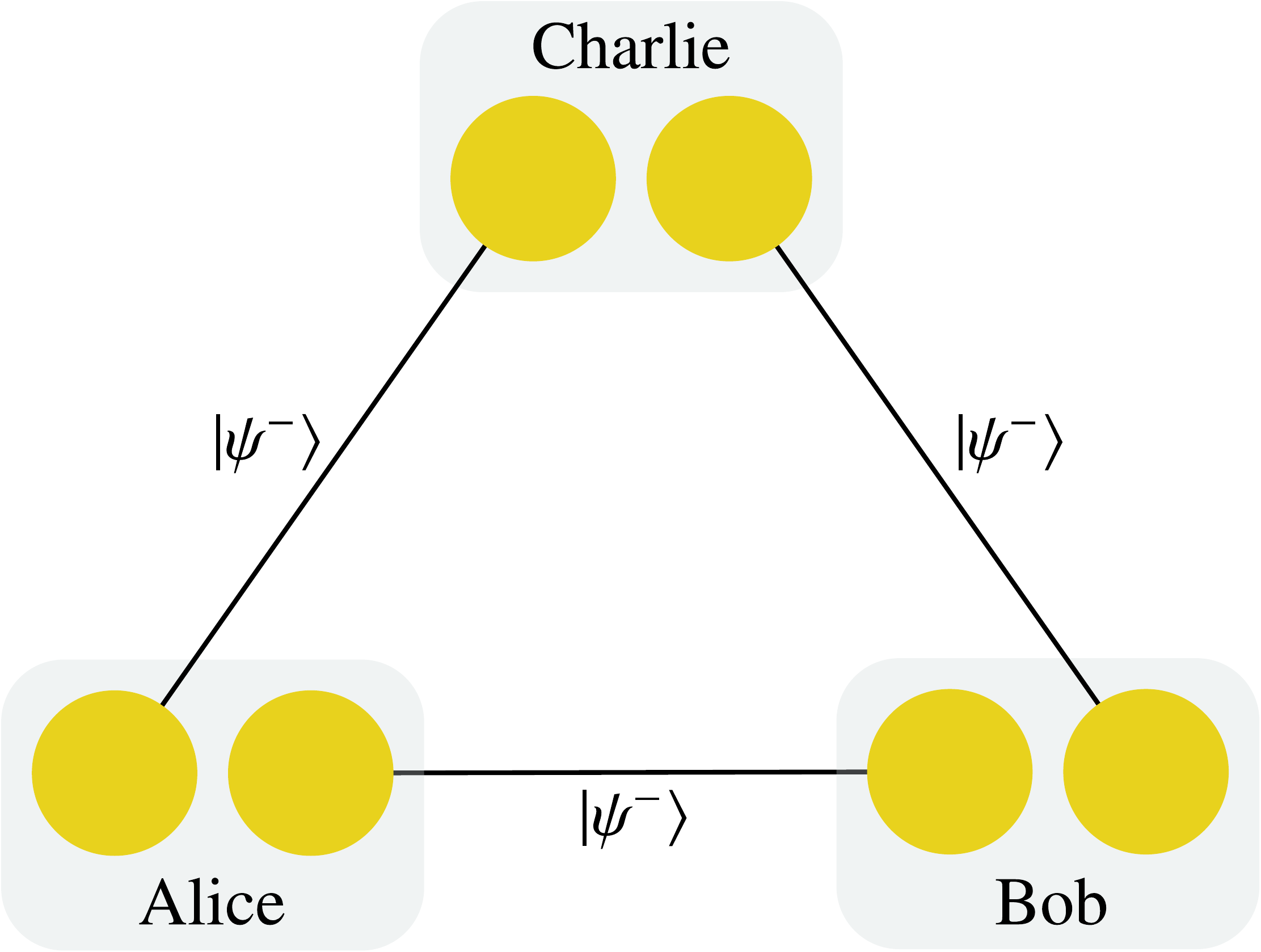}

\caption{\label{fig:Multipartite}Distillation of multipartite entanglement. We call a multipartite state distillable if it can be asymptotically converted into states comprising a singlet between each of the parties. The figure shows the desired final state for $3$ parties. The multipartite distillable entanglement is the maximal transformation rate into states of this form. With these definitions, Theorems~\ref{thm:Main} and~\ref{thm:DistillableEntanglement} extend to the multipartite setting.}
\end{figure}

Our results have implications also beyond the scope of bipartite systems. It is worth noting that Theorem~\ref{thm:Main} can be generalized to the multipartite scenario.
To this end, we consider multipartite distillable states, which are those (pure and mixed) multipartite states that can be distilled into singlets between each pair of parties with some nonzero rate in the asymptotic limit, see also Fig.~\ref{fig:Multipartite}. This includes all genuinely multipartite entangled pure states, i.e., pure states which are entangled across any bipartition~\cite{SmolinPhysRevA.72.052317,qsm,GUHNE20091}.
With this in mind, we can extend Theorem~\ref{thm:Main} to state that for any pair of multipartite distillable states, reducibility and catalysis are fully equivalent.
Furthermore, Theorem~\ref{thm:DistillableEntanglement} is also applicable to this scenario, indicating that the addition of a catalyst cannot enhance the multipartite distillable entanglement of any multipartite distillable state, we refer to the Supplemental Material for more details.
The results obtained in the multipartite setting are in line with those in the bipartite setting and imply that if catalysis offers any benefit over reducibility, it can only be observed when the initial state is not distillable. 

The limitation to distillable states in Theorem~\ref{thm:Main} can be overcome by allowing the borrowing of a pure state, that is, considering transformations from $\rho \otimes \psi$ to $\sigma \otimes \psi$ with some entangled pure state $\ket{\psi}$. In this case, the state $\rho \otimes \psi$ is distillable, leading to the equivalence of catalysis and reducibility for any $\rho$ and $\sigma$. Interestingly, this applies even if the borrowed state $\ket{\psi}$ has arbitrarily little entanglement. Similarly, we can extend Theorem~\ref{thm:DistillableEntanglement} to state that catalysis cannot increase the distillable entanglement of the state $\rho \otimes \psi$, where $\rho$ does not need to be distillable. 

The methods discussed in this article are applicable to a broader framework of entanglement theory, where the free states are the states that remain positive under partial transposition (PPT states), and the free operations are all quantum transformations whose Choi state is PPT~\cite{Rains_1999,Rains_1999a}. Within this framework, we establish the complete equivalence between reducibility and catalysis, i.e., Theorem~\ref{thm:Main} extends to all quantum states. We also show that Theorem~\ref{thm:DistillableEntanglement} applies to all states in this setting, i.e., catalysis cannot enhance the distillation rate in general. We refer to the Supplemental Material for more details. Intriguingly, this also implies that in the LOCC setting Theorem~\ref{thm:DistillableEntanglement} applies even if the initial state $\rho$ is PPT (see also~\cite{Lami2023b} for an independent proof). Thus, the only class of states for which Theorem~\ref{thm:DistillableEntanglement} is not known to hold in the LOCC setting is the class of NPT bound entangled states, the existence of which has not been confirmed~\cite{PhysRevA.61.062312,5508622}. 

There has been growing interest in other extensions of LOCC operations, such as non-entangling operations~\cite{Brandao2008, Lami2023} and dually non-entangling operations~\cite{Chitambar_2020, lami2023c}. Non-entangling operations represent the most general transformations in bipartite (or multipartite) settings that do not generate entanglement from separable states. It has been established that reversible asymptotic manipulation of entangled states is impossible even within this extended framework~\cite{Lami2023}. Furthermore, it has been demonstrated that maximally entangled mixed states for a fixed spectrum do not always exist under for LOCC or non-entangling operations~\cite{PhysRevLett.133.050202}. Dually non-entangling operations lie between LOCC and non-entangling operations, with the defining property of preserving both the sets of separable states and separable measurements~\cite{Chitambar_2020, lami2023c}.  Notably, the proof of Theorem~\ref{thm:Main} can be directly extended to cover both non-entangling and dually non-entangling operations, provided the initial state $\rho$ remains distillable under LOCC. This extends also to the multipartite setting, if we assume that the initial state can be distilled into singlets between each pair of the parties at a nonzero rate via LOCC.

These findings offer a better understanding of the relationship between entanglement catalysis and many-copy transformations, and can have practical implications for the exploitation of entanglement in quantum information processing tasks.

\smallskip

\textbf{\emph{Conclusions.}} In conclusion, our work establishes the complete equivalence between reducibility and catalysis for any pair of distillable states, which extends and confirms previous indications that these concepts are interchangeable for entangled state transformations. Furthermore, we have demonstrated that the addition of a catalyst does not alter the rate of asymptotic transformations between distillable states.

Our results shed new light on the nature of entanglement catalysis and entanglement-based protocols. The full equivalence between catalysis and reducibility for distillable states provides a clearer understanding of the limitations and capabilities of these tasks. We emphasize that our results assume that correlations can be established in the transformation procedures involved. This suggests that taking correlations into account can provide a more complete and accurate understanding of quantum information processing tasks which make use of catalysis. The methods developed in this article can guide the design of new protocols, where catalysis and correlations play a significant role. On the other hand, correlations disappear naturally for transformations into pure target states. This allows us to conclude that the addition of a catalyst cannot increase the asymptotic singlet distillation rate, provided that the initial state has non-zero distillable entanglement to begin with. 

The manipulation of entanglement in the multipartite setting is a complex and challenging problem~\cite{StreltsovPhysRevLett.125.080502}, and further research in this direction is required to fully understand and effectively utilize the power of multipartite entanglement in quantum information processing. Our findings are particularly relevant in this context, as they demonstrate the full equivalence of catalysis and reducibility for transformations between multipartite distillable states. These findings have significant implications for understanding the role of catalysis in communication protocols that rely on multipartite entangled states, such as quantum secret sharing~\cite{PhysRevLett.83.648,PhysRevA.59.1829}.

Furthermore, our work opens up new avenues for research into the relationship between reducibility and catalysis in the general case, where the assumption of distillability cannot be made. Investigating this relationship can help us better understand the nature of bound entanglement and unlock the full potential of entanglement and general quantum resources in quantum information processing. 

\emph{Note added.} In an independent work~\cite{Lami2023b} it has been shown with different methods that by using entanglement catalysis it is not possible to distill singlets from bound entangled states having positive partial transpose.

We thank Ludovico Lami for insightful comments on our manuscript. This work was supported by the ``Quantum Optical Technologies'' project, carried out within the International Research Agendas programme of the Foundation for Polish Science co-financed by the European Union under the European Regional Development Fund, and the National Science Centre Poland (Grant No. 2022/46/E/ST2/00115) and within the QuantERA II Programme (Grant No. 2021/03/Y/ST2/00178, acronym ExTRaQT) that has received funding from the European Union's Horizon 2020 research and innovation programme under Grant Agreement No. 101017733.

\bibliography{literature}

\end{document}


\title{Supplemental Material: Catalytic and asymptotic equivalence for quantum entanglement}

\author{Ray Ganardi}
\email{r.ganardi@cent.uw.edu.pl}
\affiliation{Centre for Quantum Optical Technologies, Centre of New Technologies,
University of Warsaw, Banacha 2c, 02-097 Warsaw, Poland}
\author{Tulja Varun Kondra}
\affiliation{Centre for Quantum Optical Technologies, Centre of New Technologies,
University of Warsaw, Banacha 2c, 02-097 Warsaw, Poland}
\author{Alexander Streltsov}
\affiliation{Institute of Fundamental Technological Research, Polish Academy of Sciences, Pawińskiego 5B, 02-106 Warsaw, Poland}

\maketitle

\section{Entanglement theories based on LOCC and PPT operations}

Most of the results discussed in the main text and in this Supplemental Material refer to the framework of entanglement based on LOCC, assuming that Alice and Bob can perform local measurements in their labs, and communicate via a classical communication channel. However, as we will see below, some of the results can be strengthened in a relaxation of LOCC, known as PPT entanglement theory. In this theory, the free states are the states that stay positive under partial transposition (PPT states), and the free operations are the so-called PPT operations, i.e., the set of maps whose Choi state is PPT~\cite{Rains_1999,Rains_1999a}.
It is known that the PPT operations are exactly the operations that preserves the set of PPT states, even when acting only on a subsystem~\cite{Chitambar_2020}.
Moreover, in PPT entanglement theory all non-PPT states are distillable~\cite{Eggeling_2001}.

\section{Proof of Theorem~1}

Theorem~1 of the main text states that marginal reducibility and correlated catalysis are equivalent for bipartite distillable states. The proof of the theorem follows from Propositions~\ref{prop:ReducibilityImpliesCatalysis} and \ref{prop:Distillable} which are given below. In the following, $S$ denotes a possibly multipartite quantum system.

\begin{prop} \label{prop:ReducibilityImpliesCatalysis}
Marginal reducibility from $\rho^S$ onto $\sigma^S$ implies that $\rho^S$ can be converted into $\sigma^S$ via correlated catalysis.
\end{prop}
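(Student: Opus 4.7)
The plan is to build a correlated catalyst directly from the many-copy LOCC guaranteed by marginal reducibility. The key idea is to store inside the catalyst both the auxiliary $\rho$-copies needed as extra inputs to the many-copy protocol and the stored $\mu_m$-subsystems that will be returned as outputs on $S$; a classical pointer register cycles through these roles so that the catalyst marginal is exactly preserved while the output on $S$ averages to a state close to $\sigma$.

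First, I would fix small parameters $\varepsilon', \delta' > 0$ (to be related to $\varepsilon$ at the end) and apply marginal reducibility to obtain an LOCC $\Lambda$ together with integers $m \leq n$ with $m/n \geq 1-\delta'$ such that $\Lambda(\rho^{\otimes n}) = \mu_m$ and $\|\mu_m^{(i)} - \sigma\|_1 < \varepsilon'$ for each $i \leq m$. By symmetrizing $\Lambda$ over permutations of its $n$ inputs and $m$ outputs (which remains LOCC), I may assume that every output marginal of $\mu_m$ equals a common state $\tilde{\sigma}$ with $\|\tilde{\sigma} - \sigma\|_1 < \varepsilon'$.

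Second, I would introduce the catalyst $C = P \otimes B$, where $P$ is a classical pointer uniformly distributed over a cycle $\{1, \ldots, N\}$ with $N$ a suitable multiple of $n$ and $m$, and $B$ is a quantum buffer whose contents $\eta_t^B$ depend on $P=t$:
\begin{equation*}
\tau^C = \frac{1}{N}\sum_{t=1}^{N} \ket{t}\bra{t}_P \otimes \eta_t^B.
\end{equation*}
The sequence $\{\eta_t\}$ is chosen to form a cycle under the protocol, which advances $\eta_t \mapsto \eta_{t+1 \bmod N}$. Within each batch of $n$ consecutive steps, the protocol stockpiles $n$ input copies of $\rho$ in $B$, applies $\Lambda$ once to generate $m$ fresh subsystems of $\mu_m$, and over the following $m$ steps emits these subsystems as outputs on $S$; the remaining $n-m$ steps per batch pass the input $\rho$ through directly to the output. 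Catalyst preservation $\mu^C = \tau^C$ is then exact, because the protocol acts as a cyclic shift on the support of $P$ and the uniform pointer distribution is invariant under cyclic shifts. The output marginal on $S$ averages to $(m/n)\tilde{\sigma} + ((n-m)/n)\rho$, whose trace distance from $\sigma$ is at most $\varepsilon' + 2\delta'$ by the triangle inequality; choosing $\varepsilon'$ and $\delta'$ small enough makes this smaller than $\varepsilon$.

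The main obstacle I anticipate is engineering the schedule of the $\{\eta_t\}$ so that each elementary transition can be implemented by a single conditional LOCC driven by the pointer. Qualitatively different operations are needed at different pointer values (store, trigger $\Lambda$, emit, or pass through), and these must all be concatenated into one permutation-covariant conditional LOCC on $S \otimes C$ while respecting the locality and classical-communication constraints of the distant-lab paradigm. I expect this to follow from a careful but standard bookkeeping argument for classically-controlled LOCC maps, with the technical slack provided by the cycle length $N$, which can be taken as large as needed to absorb the approximation errors.
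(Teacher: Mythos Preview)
Your approach is the same as the paper's: build a catalyst consisting of a classical pointer together with a quantum buffer that cycles through accumulating copies of $\rho$, triggering $\Lambda$ once enough copies are collected, and emitting the resulting subsystems one at a time. The paper takes $\tau=\frac{1}{n}\sum_{k=1}^{n}\rho^{\otimes(k-1)}\otimes\Gamma_{n-k}\otimes\ket{k}\!\bra{k}$ with $\Gamma=\Lambda(\rho^{\otimes n})$ and arrives at the same bound $\|\mu^S-\sigma\|_1<\varepsilon+2\delta$ you state.

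There is, however, a genuine accounting gap in your schedule. You say that within each batch of $n$ steps the buffer ``stockpiles $n$ input copies of $\rho$'' and at the same time that ``the remaining $n-m$ steps per batch pass the input $\rho$ through directly to the output.'' These are incompatible: a copy passed through to $S$ is not stored, so only $m$ copies accumulate per batch, too few to run $\Lambda$. More intrinsically, each application of $\Lambda$ consumes $n$ copies of $\rho$ but returns only $m$ output subsystems, leaving a deficit of $n-m$ states to emit; these deficit states cannot be fresh $\rho$'s without either depleting the catalyst or starving $\Lambda$, and no choice of cycle length $N$ repairs this balance. The paper's resolution is to regard $\Lambda$ as a map into $S^{\otimes n}$ rather than $S^{\otimes m}$, with the last $n-m$ output registers carrying arbitrary locally preparable junk. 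Then each of the $n$ pointer values stores exactly one incoming $\rho$ and emits exactly one marginal of $\Gamma$, the pointer simply increments modulo $n$, and the junk marginals supply the $2(n-m)/n$ term in the error bound. Your symmetrization step is valid but unnecessary once this is done; with this correction your construction coincides with the paper's.
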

\begin{proof}

Let $\Lambda$ be an LOCC
protocol converting $n$ copies of an initial state $\rho$
into a state 
\begin{equation}
\Gamma=\Lambda\left[\rho^{\otimes n}\right],\label{eq:Gamma}
\end{equation}
which is a quantum state of the system $S_{1}\otimes S_{2}\otimes\cdots\otimes S_{n}$, and each $S_i$ is a copy of the system $S$.
In the following, $\Gamma_{i}$ denotes the reduced state of $\Gamma$
on $S_{1}\otimes S_{2}\otimes\cdots\otimes S_{i}$ with $\Gamma_{0}=1$.
Moreover, $\Gamma_{i}^{(j)}$ is the reduced state of $\Gamma_{i}$
on $S_{j}$ for $j \leq i$. 

Marginal reducibility of the state $\rho$ into
$\sigma$ implies that for any $\varepsilon>0$ and any $\delta>0$ there
are integers $m \leq n$, and an LOCC protocol $\Lambda$ such that
\begin{subequations}\label{eq:MarginalReducibility}
\begin{align}
\left\Vert \Gamma_{j}^{(j)}-\sigma\right\Vert _{1} & <\varepsilon,\\
\frac{m}{n}+\delta & >1
\end{align}
\end{subequations} for all $j\in[1,m]$.

We are now ready to present a state of the catalyst $\tau$ achieving the transformation $\rho \rightarrow \sigma$: 
\begin{equation}
\tau=\frac{1}{n}\sum_{k=1}^{n}\rho^{\otimes(k-1)}\otimes\Gamma_{n-k}\otimes\ket{k}\!\bra{k},
\end{equation}
in analogy to the construction presented in~\citep{PhysRevLett.127.150503} (see also~\cite{Lipka-Bartosik2102.11846,DuanPhysRevA.71.042319}).
Here, the states $\ket{k}$ are orthonormal states of an auxiliary
system $K$ maintained by Alice. Using the LOCC protocol described
above Eq.~(7) in~\citep{PhysRevLett.127.150503}, the state $\rho^S\otimes\tau^{C}$
is transformed into a state $\mu^{SC}$ with the property that $\mathrm{Tr}_{S}[\mu^{SC}]=\tau^{C}$. Here, $C$ denotes the system of the catalyst.

What remains is to show that $\norm{\mu^S - \sigma^S}_1$ can be made arbitrarily small.
Indeed, we note that the state $\mu^{S}$ can be written as
\begin{equation}
\mu^{S}=\frac{1}{n}\sum_{k=1}^{n}\Gamma_{k}^{(k)}=\frac{1}{n}\sum_{k=1}^{m}\Gamma_{k}^{(k)}+\frac{1}{n}\sum_{k=m+1}^{n}\Gamma_{k}^{(k)}.
\end{equation}
Using Eqs.~(\ref{eq:MarginalReducibility}) we can further write 
\begin{align}
\left\Vert \mu^{S}-\sigma^{S}\right\Vert _{1} & =\frac{1}{n}\left\Vert \sum_{k=1}^{n}\left(\Gamma_{k}^{(k)}-\sigma^{S}\right)\right\Vert _{1}\\
 & \leq\frac{1}{n}\left\Vert \sum_{k=1}^{m}\left(\Gamma_{k}^{(k)}-\sigma^{S}\right)\right\Vert _{1}\nonumber\\
 & +\frac{1}{n}\left\Vert \sum_{k=m+1}^{n}\left(\Gamma_{k}^{(k)}-\sigma^{S}\right)\right\Vert _{1}\nonumber\\
 & <\frac{m}{n}\varepsilon+2\frac{n-m}{n}<\varepsilon+2\delta.\nonumber
\end{align}
The proof is complete by noting that $\varepsilon > 0$ and $\delta > 0$ can be chosen arbitrarily.
\end{proof}

To complete the proof of Theorem~1 of the main text we also need to prove the converse, which is established in the following. We will focus on bipartite systems in the following, i.e., $S=AB$. Extension to multipartite settings will be discussed below.

\begin{prop} \label{prop:Distillable}
    If a distillable state $\rho^{S}$ can be converted into $\sigma^{S}$ via correlated catalysis, then $\rho^{S}$ is reducible onto $\sigma^{S}$ in the marginals.
\end{prop}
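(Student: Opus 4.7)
The plan is to bootstrap a marginal reduction protocol from a single instance of the catalytic protocol by reusing one catalyst register over many sequential uses. Given target parameters $\varepsilon,\delta>0$, I would invoke the catalytic hypothesis to fix a catalyst state $\tau^C$ and an LOCC map $\Lambda_0$ realizing correlated catalysis $\rho\to\sigma$ with trace-distance error at most $\varepsilon/2$. Starting from $n$ copies of $\rho$, the idea is to spend a fixed, $n$-independent number $k_0$ of them to manufacture an approximate catalyst $\tilde\tau^C$ out of $\rho$ itself, and then apply $\Lambda_0$ in sequence to each of the remaining $m=n-k_0$ copies together with the single catalyst register, keeping the catalyst throughout.

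For the bootstrapping step the distillability of $\rho$ is essential: since $D(\rho)>0$, any fixed number of approximate singlets can be LOCC-distilled from a finite, $n$-independent number $k_0$ of copies of $\rho$. A standard entanglement-dilution / state-preparation protocol then converts those singlets into a state $\tilde\tau^C$ with $\|\tilde\tau-\tau\|_1<\varepsilon/2$. Because $k_0$ depends only on $\tau$ and $\varepsilon$ and not on $n$, choosing $n>k_0/\delta$ gives $m/n=1-k_0/n>1-\delta$, securing the rate condition of Eq.~(\ref{eq:CorrelatedReducibility}).

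For the iteration I would exploit the key structural fact that the induced channel $\mathcal N(\eta):=\mathrm{Tr}_{S}[\Lambda_0(\rho\otimes\eta)]$ from $C$ to $C$ has $\tau$ as an \emph{exact} fixed point (this is the correlated-catalysis identity $\mu^C=\tau^C$) and, being CPTP, is a contraction in trace norm. Applying $\Lambda_0$ successively to $S_1C,S_2C,\ldots,S_mC$, a short induction shows that the catalyst marginal stays within $\varepsilon/2$ of $\tau$ after every round. Since $\Lambda_0$ acts trivially on the already-processed subsystems, their marginals are unchanged by subsequent rounds, so at the end each $\mu_m^{(i)}$ differs from $\sigma$ by at most $\varepsilon/2+\varepsilon/2=\varepsilon$ by the triangle inequality and contractivity of $\Lambda_0$. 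Tracing out the catalyst delivers a state on $m$ subsystems satisfying all conditions of Eq.~(\ref{eq:CorrelatedReducibility}).

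The main obstacle is the bootstrapping step, and it is precisely what forces the distillability assumption into the hypothesis: without $D(\rho)>0$, we cannot synthesize any fixed catalyst from $\rho$ at sublinear cost in $n$, and importing external entanglement would have to be recorded in the rate and would in general spoil $m/n\to 1$. The error analysis of the iterative step is by contrast quite robust: contractivity of $\mathcal N$ together with its fixed-point property prevents catalyst errors from compounding across rounds, and the absence of back-action on previously processed systems keeps the per-marginal error within the initial $\varepsilon$ budget throughout.
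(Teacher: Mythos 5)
Your proposal is correct and follows essentially the same route as the paper's proof: use distillability to synthesize an $\varepsilon$-approximate catalyst from a fixed, $n$-independent number of copies of $\rho$, then reuse it sequentially, with the exact marginal-preservation condition $\mu^{C}=\tau^{C}$ plus trace-norm contractivity under CPTP maps guaranteeing that the catalyst error does not accumulate across rounds. Your phrasing in terms of the induced channel $\mathcal{N}(\eta)=\mathrm{Tr}_{S}[\Lambda_0(\rho\otimes\eta)]$ having $\tau$ as an exact fixed point is just a repackaging of the paper's direct data-processing bound on the joint output states.
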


\begin{proof}
Let $\tau$ be a state of the catalyst such that 
    \begin{subequations} \label{eq:CorrelatedCatalysis2}
\begin{align}
\mu^{SC} & =\Lambda\left(\rho^{S}\otimes\tau^{C}\right),\\
\left\Vert \mu^{S}-\sigma^{S}\right\Vert _{1} & < \delta,\,\,\,\mu^{C}=\tau^{C}
\end{align}
\end{subequations}
for some $\delta > 0$. 
    We will now show that the conditions for marginal reducibility in Eqs.~(2) of the main text are fulfilled in this case.

    Since the state $\rho$ is distillable, it is possible to distill some singlets and therefore approximate any state $\tau$ via LOCC from a finite number of copies of $\rho$. In more detail, for any $\varepsilon > 0$ there is an integer $k$ and an LOCC protocol $\Lambda'$ such that 
    \begin{equation}
    \left\Vert \Lambda'\left(\rho^{\otimes k}\right)-\tau\right\Vert _{1}<\varepsilon.\end{equation}
    In the following, the state $\tau_\varepsilon = \Lambda'(\rho^{\otimes k})$ will be called \emph{$\varepsilon$-approximation of the catalyst}.
    
    Consider now the following protocol, acting on $\rho^{\otimes n}$:
    \begin{enumerate}
    \item The last $k$ copies of the state $\rho^{\otimes n}$ are converted into $\tau_\varepsilon$ via LOCC, i.e., 
    \begin{equation}
\rho^{\otimes n}\overset{\mathrm{LOCC}}{\longrightarrow}\rho^{\otimes(n-k)}\otimes\tau_{\varepsilon}. \label{eq:TauEpsilon}
\end{equation}
    \item Each of the remaining $n-k$ copies of $\rho$ is converted approximately into the desired state, making repeated use of the state $\tau_\varepsilon$.
    \end{enumerate}

    We will now analyze in more detail the procedure described above. After the first state $\rho^{S_1}$ is converted using the $\varepsilon$-approximation of the catalyst $\tau_\varepsilon^C$, Alice and Bob share the state $\mu_1 \otimes \rho^{\otimes(n-k-1)}$, where the state $\mu_1$ is given as 
    \begin{equation}
    \mu_1^{S_1C}=\Lambda\left(\rho^{S_1}\otimes\tau^{C}_\varepsilon\right),
    \end{equation}
    and $\Lambda$ is the same LOCC protocol as in Eqs.~(\ref{eq:CorrelatedCatalysis2}). Note that
    \begin{equation}
\left\Vert \mu_{1}^{S_{1}C}-\mu^{S_{1}C}\right\Vert _{1}\leq\left\Vert \rho^{S_{1}}\otimes\tau_{\varepsilon}^{C}-\rho^{S_{1}}\otimes\tau^{C}\right\Vert _{1}<\varepsilon.
\end{equation}
Recalling that $\mu^{C} = \tau^{C}$, we obtain the inequalities
\begin{align}
\left\Vert \mu_1^{S_1}-\mu^{S_1}\right\Vert _{1} & <\varepsilon,\\
\left\Vert \mu_1^{C}-\tau^{C}\right\Vert _{1} & <\varepsilon.
\end{align}
Thus, the state $\mu_1^{C}$ is also an $\varepsilon$-approximation of the catalyst state, having the same precision as $\tau_\varepsilon$. 

Alice and Bob now convert the second copy of the state $\rho^{S_2}$, using the catalyst approximation $\mu_1^{C}$. If we define $\mu_{2}^{S_{2}C}=\Lambda(\rho^{S_{2}}\otimes\mu_{1}^{C})$, then by the same arguments as above we will find that 
\begin{align}
\left\Vert \mu_2^{S_2}-\mu^{S_2}\right\Vert _{1} & <\varepsilon,\\
\left\Vert \mu_2^{C}-\tau^{C}\right\Vert _{1} & <\varepsilon.
\end{align}
Iterating this procedure for each copy of $\rho$, we arrive at a state $\nu^{S_1 \ldots S_{n-k}}$ on $S_1 \ldots S_{n-k}$, such that the reduced states on $S_i$ fulfill 
\begin{equation}
\left\Vert \nu^{S_{i}}-\mu^{S_{i}}\right\Vert _{1}<\varepsilon.
\end{equation}
Due to Eqs.~(\ref{eq:CorrelatedCatalysis2}) and the triangle inequality we further find that 
\begin{equation}
\left\Vert \nu^{S_{i}}-\sigma\right\Vert _{1} < \varepsilon + \delta. \label{eq:NuDelta}
\end{equation}

The above arguments show that for every $\varepsilon > 0$ and $\delta > 0$ we can convert $\rho^{\otimes n}$ into a state $\nu^{S_1 \ldots S_{n-k}}$ fulfilling Eq.~(\ref{eq:NuDelta}). Interestingly, while the integer $k$ depends on $\varepsilon$ and $\delta$, the integer $n$ does not depend on these parameters. Thus, we can choose $n$ large enough, making $(n-k)/n$ arbitrarily close to 1. This proves marginal reducibility from $\rho$ to $\sigma$, and the proof of the proposition is complete.
\end{proof}

\section{Marginal and catalytic asymptotic transformation rates}
A key quantity in asymptotic entanglement theory is the transformation rate, describing the optimal performance of an asymptotic transformation of a state $\rho^S$ into another state $\sigma^S$, where $S$ denotes a possibly multipartite systems. We say that an asymptotic transformation from $\rho$ into $\sigma$ is possible at rate $r$ if for any $\varepsilon,\delta > 0$ there are integers $m$, $n$, and an LOCC protocol $\Lambda$ such that
\begin{subequations}
\begin{align}
\left\Vert \Lambda\left(\rho^{\otimes n}\right)-\sigma^{\otimes m}\right\Vert _{1} & <\varepsilon,\\
\frac{m}{n}+\delta & >r.
\end{align}
\end{subequations}
The supremum over such achievable rates $r$ is the asymptotic transformation rate $R(\rho \rightarrow \sigma)$. Rates of this form have been initially studied in the context of singlet distillation~\cite{BennettPhysRevLett.76.722,BennettPhysRevA.53.2046,PlenioMeasures}, which will also be discussed in more detail below. A state $\rho$ is reducible to $\sigma$ in the notion of~\cite{PhysRevA.63.012307} if $R(\rho \rightarrow \sigma) \geq 1$.

Moreover, we say that $\rho$ can be converted into $\sigma$ with correlated catalysis at rate $r$, if for any $\varepsilon > 0$ and any $\delta > 0$ there exist integers $m$, $n$, a catalyst state $\tau$, and an LOCC protocol $\Lambda$ such that
\begin{subequations}
    \begin{align}
\Lambda\left(\rho^{\otimes n}\otimes\tau^{C}\right) & =\mu^{S_{1}\ldots S_{m}C},\\
\left\Vert \mu^{S_{1}\ldots S_{m}}-\sigma^{\otimes m}\right\Vert _{1} & <\varepsilon,\\
\mu^{C} & =\tau^{C},\\
\frac{m}{n}+\delta & >r.
\end{align}
\end{subequations}
The supremum over all such rates will be called \emph{catalytic transformation rate} $R_\mathrm{c}(\rho \rightarrow \sigma)$.

Analogously, we say that a \emph{marginal asymptotic transformation} from $\rho$ to $\sigma$ is possible with rate $r$, if for any $\varepsilon > 0$ and any $\delta > 0$ there exist integers $m$, $n$, and an LOCC protocol $\Lambda$ such that the following equations hold:
\begin{subequations} \label{marginal_rate}
\begin{align} 
\Lambda\left(\rho^{\otimes n}\right)= & \mu^{S_1 \ldots S_m},\label{marginal_rate_1}\\
\left\Vert \mu^{S_i}-\sigma\right\Vert _{1} & <\varepsilon\,\,\,\forall i\leq m,\label{marginal_rate_2}\\
\frac{m}{n}+\delta & > r \label{marginal_rate_3}.
\end{align}
\end{subequations}
Here, $\mu^{S_1 \ldots S_m}$ is a state of the system $S_{1}\otimes S_{2}\otimes\cdots\otimes S_{m}$, where each $S_i$ is a copy of the system $S$. The largest value of $r$ fulfilling these properties will be called \emph{marginal transformation rate} $R_\mathrm{m}(\rho \rightarrow \sigma)$. We note that rate of this form has been defined previously in~\cite{Ferrari2023}. A state $\rho$ is said to be reducible to $\sigma$ in the marginals if $R_\mathrm{m}(\rho \rightarrow \sigma) \geq 1$.

Finally, we say that a state $\rho$ can be converted into $\sigma$ via \emph{marginal asymptotic transformations with correlated catalysis} with rate $r$ if for any $\varepsilon,\delta > 0$ there exist integers $m$, $n$, a state of the catalyst $\tau^C$, and an LOCC protocol $\Lambda$ such that
\begin{subequations} \label{eq:MarginalCatalytic}
\begin{align}
\Lambda\left(\rho^{\otimes n}\otimes\tau^{C}\right) & =\mu^{S_{1}\ldots S_{m}C},\\
\left\Vert \mu^{S_{i}}-\sigma\right\Vert _{1} & <\varepsilon\,\,\,\forall i\leq m,\\
\mu^{C} & =\tau^{C},\\
\frac{m}{n}+\delta & >r.
\end{align}
\end{subequations}
The maximal such rate will be called \emph{marginal catalytic transformation rate} $R_\mathrm{mc}(\rho \rightarrow \sigma)$. It is straightforward to see that 
\begin{align}
R_{\mathrm{mc}}(\rho\rightarrow\sigma) & \geq R_{\mathrm{m}}(\rho\rightarrow\sigma)\geq R(\rho\rightarrow\sigma), \label{eq:RmBound}\\
R_{\mathrm{mc}}(\rho\rightarrow\sigma) & \geq R_{\mathrm{c}}(\rho\rightarrow\sigma)\geq R(\rho\rightarrow\sigma). \label{eq:RcBound}
\end{align}
In analogy to the above definitions, we define $R^{\mathrm{PPT}}$, $R_{\mathrm{m}}^{\mathrm{PPT}}$, $R_{\mathrm{c}}^{\mathrm{PPT}}$, and
$R_{\mathrm{mc}}^{\mathrm{PPT}}$ to be the rates achievable via PPT operations. It is clear that Eqs.~(\ref{eq:RmBound}) and~(\ref{eq:RcBound}) also hold in this setting.

We will now prove that for bipartite distillable states, catalysis cannot enhance marginal asymptotic transformation rates.
\begin{prop} \label{prop:Rates}
For any two bipartite distillable states $\rho$ and $\sigma$ it
holds that 
\begin{equation} 
R_{\mathrm{mc}}\left(\rho\rightarrow\sigma\right)=R_{\mathrm{m}}\left(\rho\rightarrow\sigma\right).
\end{equation}
\end{prop}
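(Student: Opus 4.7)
The inequality $R_{\mathrm{mc}}(\rho \to \sigma) \geq R_{\mathrm{m}}(\rho \to \sigma)$ is immediate, since any marginal protocol is a marginal-catalytic protocol with trivial catalyst, so only the reverse inequality $R_{\mathrm{m}}(\rho \to \sigma) \geq R_{\mathrm{mc}}(\rho \to \sigma)$ needs proof. The plan is to lift the iteration technique used in the proof of Proposition~\ref{prop:Distillable} to the rate setting: fix any rate $r$ achievable with correlated catalysis in the marginal sense, realized by some integers $m$, $n$, a catalyst $\tau$, and an LOCC map $\Lambda$ satisfying Eqs.~(\ref{eq:MarginalCatalytic}) with tolerances $\varepsilon_0, \delta_0 > 0$ and $m/n + \delta_0 > r$. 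Since $\rho$ is distillable, for any $\eta > 0$ there is an integer $k$ and an LOCC protocol producing an $\eta$-approximation $\tau_\eta$ of $\tau$ from $\rho^{\otimes k}$.

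The marginal protocol I would construct, acting on $N = \ell n + k$ copies of $\rho$, proceeds as follows: use the first $k$ copies to prepare $\tau_\eta$, then on each of $\ell$ subsequent blocks of $n$ copies of $\rho$ apply the catalytic map $\Lambda$ using the current catalyst approximation, keep the $m$ system outputs as part of the final state, and recycle the output on the catalyst register as the input catalyst for the next block. After $\ell$ blocks this produces a state on $\ell m$ subsystems without any external catalyst. The key step, mirroring the calculation after Eq.~(\ref{eq:TauEpsilon}), is to show that the catalyst approximation error does not accumulate across iterations: if the input catalyst to some block is within trace distance $\eta$ of $\tau$, then by LOCC monotonicity of the trace norm the full output is within $\eta$ of $\Lambda(\rho^{\otimes n} \otimes \tau)$; hence the output catalyst marginal is within $\eta$ of the ideal value $\tau$, and each of the $m$ output system marginals is within $\eta + \varepsilon_0$ of $\sigma$ by the triangle inequality. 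By induction on the block index, every one of the $\ell m$ output marginals is within $\eta + \varepsilon_0$ of $\sigma$.

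The overall rate of the constructed protocol is $\ell m / (\ell n + k)$, which tends to $m/n > r - \delta_0$ as $\ell \to \infty$. Choosing $\eta$ and $\varepsilon_0$ sufficiently small and $\ell$ sufficiently large therefore realizes the marginal asymptotic rate $r$ in the sense of Eqs.~(\ref{marginal_rate}); taking the supremum over $r$ gives $R_{\mathrm{m}}(\rho \to \sigma) \geq R_{\mathrm{mc}}(\rho \to \sigma)$, completing the proof. The only nontrivial ingredient is the ability to prepare arbitrarily good catalyst approximations from a bounded number of extra copies of $\rho$, which is precisely where distillability of $\rho$ is needed; the remainder is routine iteration together with monotonicity and the triangle inequality.
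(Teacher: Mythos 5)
Your proposal is correct and follows essentially the same route as the paper's own proof: approximate the catalyst from $k$ extra copies of $\rho$ using distillability, run the catalytic block protocol $\ell$ times while recycling the (non-accumulating, by LOCC monotonicity of the trace norm) catalyst approximation, and let $\ell m/(\ell n+k)\to m/n$. No gaps.
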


\begin{proof}
We will show that a marginal catalytic protocol achieving the rate $R_\mathrm{mc}$ can always be used to construct a marginal protocol without the catalyst, achieving the same rate. For this let $\tau$ be the state of the catalyst such that Eqs.~(\ref{eq:MarginalCatalytic}) are fulfilled. In analogy to the proof of Proposition~\ref{prop:Distillable}, recall that the state $\tau$ can be approximated by a state $\tau_{\varepsilon'}$, which can be obtained via LOCC from a finite number of copies of the initial state $\rho$, i.e., $\tau_{\varepsilon'} = \Lambda'(\rho^{\otimes k})$ and $||\tau_{\varepsilon'} - \tau||_1<\varepsilon'$.

Consider now the following LOCC protocol, acting on $n+k$ copies of $\rho$. In the first step, $k$ copies of the state $\rho$ are converted into $\tau_{\varepsilon'}$ via LOCC. After this step, the total state is given by $\rho^{\otimes n} \otimes \tau_{\varepsilon'}$. In the next step, Alice and Bob apply the LOCC protocol from Eqs.~(\ref{eq:MarginalCatalytic}). The resulting state will be denoted by $\mu_1$, and can be explicitly written as
\begin{equation}
\mu_{1}^{S_{1}\ldots S_{m}C}=\Lambda\left(\rho^{\otimes n}\otimes\tau_{\varepsilon'}^{C}\right).
\end{equation}
Note that 
\begin{equation}
\left\Vert \mu_{1}^{S_{1}\ldots S_{m}C}-\mu^{S_{1}\ldots S_{m}C}\right\Vert _{1}\leq\left\Vert \rho^{\otimes n}\otimes\tau_{\varepsilon'}-\rho^{\otimes n}\otimes\tau_{\varepsilon}\right\Vert _{1}<\varepsilon'
\end{equation}
which implies the inequalities 
\begin{align}
\left\Vert \mu_{1}^{S_{i}}-\mu^{S_{i}}\right\Vert _{1} & <\varepsilon',\\
\left\Vert \mu_{1}^{C}-\tau^{C}\right\Vert _{1} & <\varepsilon'.
\end{align}
The latter inequality implies that $\mu_1^C$ approximates the state $\tau$ with same precision as $\tau_{\varepsilon'}$. Using Eqs.~(\ref{eq:MarginalCatalytic}) and the triangle inequality we further obtain
\begin{equation}
\left\Vert \mu_{1}^{S_{i}}-\sigma\right\Vert _{1}<\varepsilon+\varepsilon'
\end{equation}
for all $i \leq m$.

We will now extend our analysis to $2n + k$ copies of the initial state $\rho$. Again, $k$ copies of $\rho$ will be used to establish the state $\tau_{\varepsilon'}$, resulting in the total state $\rho^{\otimes n}\otimes\rho^{\otimes n}\otimes\tau_{\varepsilon'}$. The first $n$ copies of $\rho$ together with $\tau_{\varepsilon'}$ are converted into the state $\mu_1$, as described above in this proof, leading to the total state $\mu_{1}^{S_{1}\ldots S_{m}C}\otimes\rho^{\otimes n}$. The remaining $n$ copies of $\rho$ are now converted with the LOCC protocol given in Eqs.~(\ref{eq:MarginalCatalytic}), using $\mu_1^C$ as the catalyst state. Recall that $\mu_1^C$ approximates the state $\tau^C$ with the error $\varepsilon'$, which is the same as for $\tau_{\varepsilon'}$. The total state of the systems $S_{m+1} \ldots S_{2m}$ after this transformation will be denoted by $\mu_{2}^{S_{m+1}\ldots S_{2m}C} = \Lambda(\rho^{\otimes n}\otimes \mu_1^C)$. By the same arguments as above, we find that 
\begin{align}
\left\Vert \mu_{2}^{C}-\tau^{C}\right\Vert _{1} & <\varepsilon',\\
\left\Vert \mu_{2}^{S_{i}}-\sigma\right\Vert _{1} & <\varepsilon+\varepsilon'
\end{align}
for all $i \in [m+1,2m]$.

Iterating the above procedure $l$ times, we see that it is possible to convert the state $\rho^{\otimes ln + k}$ into the state $\nu^{S_{1}\ldots S_{lm}}$ having the property that $||\nu^{S_{i}}-\sigma||_{1}<\varepsilon+\varepsilon'$ for all $i \in [1,lm]$. Since this procedure works for any $l$, choosing $l$ large enough we can make $\frac{lm}{ln+k}$ arbitrarily close to $\frac{m}{n}$, and thus also arbitrarily close to $R_\mathrm{mc}$. This proves that it is possible to convert $\rho$ into $\sigma$ with rate $R_\mathrm{mc}$ via marginal asymptotic transformations, and the proof of the proposition is complete.
\end{proof}

We note that in the proof of Proposition~\ref{prop:Rates} we only used the fact that the initial state $\rho$ is distillable. Thus the proposition applies for any target state $\sigma$. Moreover, the proof of Propisition~\ref{prop:Rates} immediately extends to PPT operations, i.e., it holds that 
\begin{equation}
R_{\mathrm{mc}}^{\mathrm{PPT}}(\rho\rightarrow\sigma)=R_{\mathrm{m}}^{\mathrm{PPT}}(\rho\rightarrow\sigma) \label{eq:PPTrates-1}
\end{equation}
whenever the state $\rho$ is distillable under PPT operations. Recalling that all NPT states are distillable via PPT operations~\cite{Eggeling_2001}, it follows that Eq.~(\ref{eq:PPTrates-1}) holds whenever $\rho$ is NPT. We will now show that Eq.~(\ref{eq:PPTrates-1}) also holds if $\rho$ is PPT. Clearly, it is enough to focus on the setting $\rho \in \mathrm{PPT}$ and $\sigma \in \mathrm{NPT}$. In this case we will show that 
\begin{equation}
R_{\mathrm{mc}}^{\mathrm{PPT}}(\rho\rightarrow\sigma)=R_{\mathrm{m}}^{\mathrm{PPT}}(\rho\rightarrow\sigma)=0.
\end{equation}
Since $R_{\mathrm{mc}}^{\mathrm{PPT}} \geq R_{\mathrm{m}}^{\mathrm{PPT}}$, it is enough to prove that $R_{\mathrm{mc}}^{\mathrm{PPT}}(\rho\rightarrow\sigma)=0$ in this setting. For this, let $\mu^{S_{1}\ldots S_{m}C}$ be a quantum state fulfilling Eqs.~(\ref{eq:MarginalCatalytic}) with $\Lambda$ being a PPT operation. Consider the distillable entanglement under PPT operations $E_{\mathrm{d}}^{\mathrm{PPT}}$.  Noting that $E_{\mathrm{d}}^{\mathrm{PPT}}$ is superadditive (see Section~\ref{sec:Superadditivity}), it holds that 
\begin{align}
E_{\mathrm{d}}^{\mathrm{PPT}}(\tau^{C}) & =E_{\mathrm{d}}^{\mathrm{PPT}}(\rho^{\otimes n}\otimes\tau^{C})\geq E_{\mathrm{d}}^{\mathrm{PPT}}(\mu^{S_{1}\ldots S_{m}C})\\
 & \geq E_{\mathrm{d}}^{\mathrm{PPT}}(\mu^{S_{1}\ldots S_{m}})+E_{\mathrm{d}}^{\mathrm{PPT}}(\mu^{C})\nonumber \\
 & =E_{\mathrm{d}}^{\mathrm{PPT}}(\mu^{S_{1}\ldots S_{m}})+E_{\mathrm{d}}^{\mathrm{PPT}}(\tau^{C}),\nonumber 
\end{align}
which implies that the state $\mu^{S_{1}\ldots S_{m}}$ is PPT. Thus, $\mu^{S_i}$ is PPT for all $i \in [1,m]$. Since the target state $\sigma$ is NPT, it has a finite distance from the set of PPT states, which means that in Eqs.~(\ref{eq:MarginalCatalytic}) it is not possible to choose an arbitrarily small $\varepsilon$ whenever the rate $r$ is nonzero. This proves that $R_{\mathrm{mc}}^{\mathrm{PPT}}(\rho\rightarrow\sigma)=0$ in this setting.

We note that the role of catalysis for many-copy transformations between bipartite pure states has been studied earlier in~\cite{DuanPhysRevA.71.042319}. In particular, it was shown that multiple copy transformations, with the aid of a pure catalyst, are equivalent to the single copy catalytic transformation with arbitrary pure catalyst states~\cite{DuanPhysRevA.71.042319}.

\section{Marginal and catalytic transformations for pure target states}

Here we will focus on a bipartite setting with a pure target state $\ket{\phi}$. A case of particular interest is when the target state is a singlet, in which case $R(\rho \rightarrow \psi^-)$ is known as the distillable entanglement $E_\mathrm{d}(\rho)$~\cite{BennettPhysRevLett.76.722,BennettPhysRevA.53.2046,PlenioMeasures}. Using the fact that pure state transformations are asymptotically reversible~\cite{BennettPhysRevA.53.2046}, it is straightforward to see that for pure target states the following holds:
\begin{equation}
R(\rho\rightarrow\phi)=\frac{E_{\mathrm{d}}(\rho)}{S(\phi^{A})}. \label{eq:RatePureTarget}
\end{equation}
We will now show that $R_\mathrm m$ and $R$ coincide for pure target states.

\begin{prop} \label{prop:PureTarget}
The marginal transformation rate coincides with the standard transformation rate for pure target states:
\begin{equation}
R_{\mathrm{m}}(\rho\rightarrow \phi)=R(\rho\rightarrow \phi).
\end{equation}
\end{prop}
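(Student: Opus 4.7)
The inequality $R_{\mathrm{m}}(\rho\to\phi)\geq R(\rho\to\phi)$ follows directly from the rate hierarchy established earlier, since any protocol producing a state close to $\phi^{\otimes m}$ globally has each of its $m$ marginals close to $\phi$. My plan for the nontrivial reverse inequality is to upper bound the marginal rate via the distillable entanglement $E_{\mathrm{d}}$, exploiting the purity of the target $\ket{\phi}$ to close the continuity gap.

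Concretely, I would fix any rate $r$ achievable in the marginal sense and consider the output $\mu=\Lambda(\rho^{\otimes n})$ of a marginal protocol satisfying $m/n>r-\delta$ and $\Vert\mu^{S_{i}}-\ket{\phi}\!\bra{\phi}\Vert_{1}\leq\varepsilon$ for every $i\leq m$. The argument has three ingredients. First, LOCC monotonicity of $E_{\mathrm{d}}$ and its additivity on tensor powers give $E_{\mathrm{d}}(\mu)\leq E_{\mathrm{d}}(\rho^{\otimes n})=nE_{\mathrm{d}}(\rho)$. Second, since Alice and Bob can run asymptotic distillation protocols on the different sites in parallel on many copies of $\mu$, the distillable entanglement is superadditive across sites, $E_{\mathrm{d}}(\mu)\geq\sum_{i=1}^{m}E_{\mathrm{d}}(\mu^{S_{i}})$. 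Third, because each marginal is $\varepsilon$-close to the pure state $\ket{\phi}\!\bra{\phi}$, the hashing bound $E_{\mathrm{d}}(\mu^{S_{i}})\geq S(\mu^{B_{i}})-S(\mu^{S_{i}})$ combined with Fannes--Audenaert continuity of the von Neumann entropy yields $E_{\mathrm{d}}(\mu^{S_{i}})\geq S(\phi^{A})-g(\varepsilon)$ for some $g$ with $g(\varepsilon)\to 0$. Chaining these estimates gives $m\bigl(S(\phi^{A})-g(\varepsilon)\bigr)\leq nE_{\mathrm{d}}(\rho)$, and sending $\varepsilon,\delta\to 0$ together with Eq.~(\ref{eq:RatePureTarget}) then yields $r\leq R(\rho\to\phi)$, as required.

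The main obstacle I anticipate is the third step. For a generic mixed target one would need genuine asymptotic continuity of $E_{\mathrm{d}}$, which is not known. Purity is what rescues the argument here: since $S(\mu^{S_{i}})\to 0$ and $S(\mu^{B_{i}})\to S(\phi^{B})=S(\phi^{A})$ by Fannes, the coherent-information lower bound on $E_{\mathrm{d}}(\mu^{S_{i}})$ saturates $S(\phi^{A})$ up to a small correction that vanishes with $\varepsilon$. The trivial case $S(\phi^{A})=0$, i.e.\ product $\phi$, can be treated separately, so I would assume $\phi$ is entangled throughout. Note that this measure-based strategy also circumvents the potential issue that global trace-norm closeness $\Vert\mu-\ket{\phi}\!\bra{\phi}^{\otimes m}\Vert_{1}$ deduced from marginal closeness by a fidelity union bound scales as $\sqrt{m\varepsilon}$ and need not vanish as $\varepsilon\to 0$ for $m$ growing with the protocol.
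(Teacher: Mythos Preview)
Your proposal is correct and follows essentially the same route as the paper: both arguments use LOCC monotonicity and tensor-power additivity of $E_{\mathrm{d}}$, superadditivity of $E_{\mathrm{d}}$ across the output sites, and the hashing (coherent-information) lower bound to establish continuity of $E_{\mathrm{d}}$ at pure states, then combine these with Eq.~(\ref{eq:RatePureTarget}). The paper packages the continuity step into an auxiliary rate $\tilde{R}_{\mathrm{m}}$ and isolates superadditivity as a separate proposition (using Lemma~\ref{lem:UsefulLemma} to decouple after distilling on one factor), but the logical skeleton is identical to yours.
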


\begin{proof}
If $S(\phi^A) = 0$, the target state is not entangled, and both $R(\rho \to \phi)$ and  $R_\mathrm{m}(\rho \to \phi)$ diverge in this case. Without loss of generality we assume $S(\phi^A) > 0$.  

We introduce a slightly different task of transforming a state $\rho$ asymptotically into a state with marginals having distillable entanglement close to $E_\mathrm{d} (\phi) = S(\phi^A)$. Here, we say that a transformation with rate $r$ is possible if for any $\varepsilon,\delta > 0$ there exist integers $m$, $n$ and an LOCC protocol $\Lambda$ such that
\begin{subequations} \label{eq:RmPure}
\begin{align}
\Lambda\left(\rho^{\otimes n}\right) & =\mu^{S_{1}\ldots S_{m}},\\
\left|E_{\mathrm{d}}(\mu^{S_{i}})-S(\phi^{A})\right| & <\varepsilon\,\,\,\forall i\leq m,\\
\frac{m}{n}+\delta & >r.
\end{align}
\end{subequations}
The maximal such rate will be denoted $\tilde{R}_\mathrm{m}(\rho \rightarrow \phi)$.
Recall that the distillable entanglement is bounded as~\cite{959270,Devetak2005}
\begin{equation}
S(\rho^{A})-S(\rho^{AB})\leq E_{\mathrm{d}}(\rho^{AB})\leq S(\rho^{A}).
\end{equation}
This implies that $E_\mathrm d$ is continuous in the vicinity of any pure state, and therefore $\tilde{R}_{\mathrm{m}}(\rho \rightarrow \phi)\geq R_{\mathrm{m}}(\rho \rightarrow \phi)$.

Consider now an LOCC protocol achieving Eqs.~(\ref{eq:RmPure}). Using Eq.~(\ref{eq:RatePureTarget}) and the properties of distillable entanglement (see also Proposition~\ref{prop:EdSuperadditive}) we find
\begin{align}
nR\left(\rho\rightarrow\phi\right) & =n\frac{E_{\mathrm{d}}(\rho)}{S(\phi^{A})}=\frac{E_{\mathrm{d}}\left(\rho^{\otimes n}\right)}{S(\phi^{A})}\geq\frac{E_{\mathrm{d}}\left(\mu^{S_{1}\ldots S_{m}}\right)}{S(\phi^{A})} \label{eq:LOCCRmRequivalence}\\
 & \geq \frac{1}{S(\phi^{A})}\sum_{i=1}^{m}E_{\mathrm{d}}\left(\mu^{S_{i}}\right).\nonumber 
\end{align}
Using this inequality and Eqs.~(\ref{eq:RmPure}) we further obtain
\begin{equation}
nR\left(\rho \rightarrow \phi \right) > m\left(1-\frac{\varepsilon}{S(\phi^A)}\right),
\end{equation}
which implies that 
\begin{equation}
\frac{m}{n}<\frac{R\left(\rho\rightarrow\phi\right)}{1-\frac{\varepsilon}{S(\phi^{A})}}.
\end{equation}
Using Eqs.~(\ref{eq:RmPure}) once again we arrive at 
\begin{equation}
r<\frac{R\left(\rho\rightarrow\phi\right)}{1-\frac{\varepsilon}{S(\phi^{A})}}+\delta.
\end{equation}
Recalling that $\varepsilon,\delta > 0$ can be chosen arbitrarily, we conclude that $\tilde{R}_{\mathrm{m}}(\rho \rightarrow \phi)\leq R(\rho \rightarrow \phi)$. Collecting the above arguments we have 
\begin{equation}
\tilde{R}_{\mathrm{m}}(\rho \rightarrow \phi)\geq R_{\mathrm{m}}(\rho\rightarrow\phi)\geq R(\rho\rightarrow\phi)\geq\tilde{R}_{\mathrm{m}}(\rho \rightarrow \phi),
\end{equation}
which shows that these inequalities are actually equalities. 
\end{proof}

The above proposition also extends to entanglement theory based on PPT operations, i.e., it holds that 
\begin{equation}
R_{\mathrm{m}}^{\mathrm{PPT}}(\rho\rightarrow\phi)=R^{\mathrm{PPT}}(\rho\rightarrow\phi). \label{eq:RmPPT}
\end{equation}
To see this, recall the inequalities
\begin{align}
\ensuremath{S(\rho^{A})-S(\rho^{AB})} & \leq E_{\mathrm{d}}^{\mathrm{LOCC}}(\rho^{AB})\leq E_{\mathrm{d}}^{\mathrm{PPT}}(\rho^{AB})\\
 & \leq E_{\mathrm{c}}^{\mathrm{PPT}}(\rho^{AB})\leq E_{\mathrm{c}}^{\mathrm{LOCC}}(\rho^{AB})\leq S(\rho^{A}),\nonumber 
\end{align}
where $E_{\mathrm{c}}^{X}$ is the entanglement cost under the operation set $X$. For pure target states the rate $R^{\mathrm{PPT}}$ and $E_{\mathrm{d}}^{\mathrm{PPT}}$ are related analogously to Eq.~(\ref{eq:RatePureTarget}) (see also~\cite{Matthews_2008}): 
\begin{equation}
R^{\mathrm{PPT}}(\rho\rightarrow\phi) =\frac{E_{\mathrm{d}}^{\mathrm{PPT}}(\rho)}{S(\phi^{A})}.
\end{equation}

Equipped with these tools, we can define $\tilde{R}_{\mathrm{m}}^{\mathrm{PPT}}$, in analogy to the definition of $\tilde{R}_{\mathrm{m}}$ in Eq.~(\ref{eq:RmPure}), and it holds that
\begin{equation}
    \tilde{R}_{\mathrm{m}}^{\mathrm{PPT}}(\rho\rightarrow\phi) \geq R_{\mathrm{m}}^{\mathrm{PPT}}(\rho\rightarrow\phi).
\end{equation}
In analogy to Eq.~(\ref{eq:LOCCRmRequivalence}) we obtain 
\begin{align}
nR^{\mathrm{PPT}}\left(\rho\rightarrow\phi\right) & =n\frac{E_{\mathrm{d}}^{\mathrm{PPT}}(\rho)}{S(\phi^{A})}=\frac{E_{\mathrm{d}}^{\mathrm{PPT}}\left(\rho^{\otimes n}\right)}{S(\phi^{A})}\geq\frac{E_{\mathrm{d}}^{\mathrm{PPT}}\left(\mu^{S_{1}\ldots S_{m}}\right)}{S(\phi^{A})}\nonumber \\
 & \geq\frac{1}{S(\phi^{A})}\sum_{i=1}^{m}E_{\mathrm{d}}^{\mathrm{PPT}}\left(\mu^{S_{i}}\right).
\end{align}
Using the same arguments as below Eq.~(\ref{eq:LOCCRmRequivalence}) we find that $\tilde{R}_{\mathrm{m}}^{\mathrm{PPT}}(\rho\rightarrow\phi)\leq R^{\mathrm{PPT}}(\rho\rightarrow\phi)$, and thus 
\begin{equation}
\tilde{R}_{\mathrm{m}}^{\mathrm{PPT}}(\rho\rightarrow\phi)\geq R_{\mathrm{m}}^{\mathrm{PPT}}(\rho\rightarrow\phi)\geq R^{\mathrm{PPT}}(\rho\rightarrow\phi)\geq\tilde{R}_{\mathrm{m}}^{\mathrm{PPT}}(\rho\rightarrow\phi).
\end{equation}
This proves that these inequalities are actually equalities.

We note that Proposition~\ref{prop:PureTarget} can be extended also to quantum resource theories different from entanglement, we refer to Proposition~\ref{prop:resource}.

From the above proposition, it follows that in this setting marginal reducibility is equivalent to reducibility as defined in~\cite{PhysRevA.63.012307}. This means that for pure target states in the bipartite setting allowing for correlations between the marginals does not improve the transformation rate. Combining the above results, we can now prove the following proposition.
\begin{prop} \label{prop:RatesPureTarget}
    For any bipartite distillable state $\rho$ and any bipartite pure state $\ket{\phi}$ it holds that 
    \begin{align}
R_{\mathrm{mc}}(\rho\rightarrow\phi) & =R_{\mathrm{m}}(\rho\rightarrow\phi)=R_{\mathrm{c}}(\rho\rightarrow\phi) \label{eq:RatesPureTarget}\\
 & =R(\rho\rightarrow\phi)=\frac{E_{\mathrm{d}}(\rho)}{S(\phi^{A})}.\nonumber 
\end{align}
\end{prop}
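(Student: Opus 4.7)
The plan is to assemble this proposition directly from the three results already established in the excerpt: Proposition~\ref{prop:Rates}, Proposition~\ref{prop:PureTarget}, and Eq.~(\ref{eq:RatePureTarget}). The only genuinely new content is a short sandwich argument to squeeze $R_{\mathrm{c}}$ between $R$ and $R_{\mathrm{mc}}$. Before starting, I would dispose of the degenerate case where $\ket{\phi}$ is a product state: then $S(\phi^A) = 0$, and all four rates on the left diverge, matching the right-hand side interpreted as $+\infty$. From now on I assume $\ket{\phi}$ is entangled, so in particular $\ket{\phi}$ is distillable and Proposition~\ref{prop:Rates} is applicable to the pair $(\rho, \phi)$.

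First, Proposition~\ref{prop:Rates} gives $R_{\mathrm{mc}}(\rho\rightarrow\phi) = R_{\mathrm{m}}(\rho\rightarrow\phi)$, since both $\rho$ and $\ket{\phi}$ are distillable. Second, Proposition~\ref{prop:PureTarget} gives $R_{\mathrm{m}}(\rho\rightarrow\phi) = R(\rho\rightarrow\phi)$ for any pure target. Chaining these two identities yields
\begin{equation}
R_{\mathrm{mc}}(\rho\rightarrow\phi) = R(\rho\rightarrow\phi).
\end{equation}
The final equality $R(\rho\rightarrow\phi) = E_{\mathrm{d}}(\rho)/S(\phi^A)$ is just Eq.~(\ref{eq:RatePureTarget}), which follows from asymptotic reversibility of pure-state transformations.

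It remains to place $R_{\mathrm{c}}(\rho\rightarrow\phi)$ into the chain. Here I would invoke the general inequalities already recorded in the excerpt, namely $R_{\mathrm{mc}}(\rho\rightarrow\phi) \geq R_{\mathrm{c}}(\rho\rightarrow\phi) \geq R(\rho\rightarrow\phi)$, see Eq.~(\ref{eq:RcBound}). Since the outer terms have just been shown to agree, the sandwich forces $R_{\mathrm{c}}(\rho\rightarrow\phi) = R(\rho\rightarrow\phi)$ as well, completing the string of equalities in~(\ref{eq:RatesPureTarget}).

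There is no real obstacle in this proof: all the analytic work (the averaging construction for the catalyst in Proposition~\ref{prop:ReducibilityImpliesCatalysis}, the iterated substitution of approximate catalysts in Propositions~\ref{prop:Distillable} and~\ref{prop:Rates}, and the continuity-plus-superadditivity argument in Proposition~\ref{prop:PureTarget}) has already been done upstream. The mild subtlety worth spelling out explicitly is that Proposition~\ref{prop:Rates} needs both endpoints to be distillable, which is why the product-state case must be isolated; and that $E_d$ is used here only through Eq.~(\ref{eq:RatePureTarget}), so no new entanglement-measure computations are needed.
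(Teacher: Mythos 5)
Your proposal is correct and matches the paper's own proof, which is exactly the one-line combination of Propositions~\ref{prop:Rates} and~\ref{prop:PureTarget} with Eqs.~(\ref{eq:RcBound}) and~(\ref{eq:RatePureTarget}); the sandwich argument for $R_{\mathrm{c}}$ and the isolation of the product-state case (needed because Proposition~\ref{prop:Rates} requires a distillable target) are the right way to fill in the details the paper leaves implicit.
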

\begin{proof}
The proof follows by combining Propositions~\ref{prop:Rates} and~\ref{prop:PureTarget} with Eqs.~(\ref{eq:RcBound}) and (\ref{eq:RatePureTarget}).
\end{proof}

For entanglement theory based on PPT operations we find the following equalities for any bipartite state $\rho$ and any bipartite pure state $\ket{\phi}$:
\begin{align}
R_{\mathrm{mc}}^{\mathrm{PPT}}(\rho\rightarrow\phi) & =R_{\mathrm{m}}^{\mathrm{PPT}}(\rho\rightarrow\phi)=R_{\mathrm{c}}^{\mathrm{PPT}}(\rho\rightarrow\phi) \label{eq:PPTrates}\\
 & =R^{\mathrm{PPT}}(\rho\rightarrow\phi)=\frac{E_{\mathrm{d}}^{\mathrm{PPT}}(\rho)}{S(\phi^{A})}.\nonumber 
\end{align}

Interestingly, it remains unclear whether Proposition~\ref{prop:PureTarget} extends to the multipartite setting, if $\ket{\phi}$ is a general multipartite pure state. However, as we will see below, Proposition~\ref{prop:PureTarget} also applies if $\ket{\phi}$ is a specific multipartite state, comprising a singlet shared between each of the parties.

We will now investigate catalytic transformations with pure target states. We say that a state $\rho$ can be converted into $\sigma$ via \emph{correlated catalysis with decoupling} if for any $\varepsilon > 0$ there is a catalyst state $\tau$ and an LOCC protocol $\Lambda$ such that~\cite{PhysRevLett.127.150503,datta2022entanglement}
\begin{subequations}
    \begin{align}
\left\Vert \Lambda\left(\rho^{S}\otimes\tau^{C}\right)-\sigma^{S}\otimes\tau^{C}\right\Vert _{1} & <\varepsilon,\\
\mathrm{Tr}_{S}\left[\Lambda\left(\rho^{S}\otimes\tau^{C}\right)\right] & =\tau^{C}.
\end{align}
\end{subequations}
Note that in this framework, the correlations between the primary system $S$ and the catalyst $C$ can be made vanishingly small.
As we show in the following proposition, for pure target states $\sigma = \phi$, correlated catalysis is equivalent to correlated catalysis with decoupling. Here, $S$ denotes a possibly multipartite system.

\begin{prop} \label{prop:CatalysisPure}
    A state $\rho^S$ can be converted into a pure state $\ket{\psi}^S$ via correlated catalysis if and only if the conversion is possible via correlated catalysis with decoupling.
\end{prop}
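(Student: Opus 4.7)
The ``if'' direction is immediate: correlated catalysis with decoupling is a strictly stronger notion than correlated catalysis, because the marginal condition $\|\mu^S-\ket{\psi}\!\bra{\psi}\|_1<\varepsilon$ follows from the decoupling condition by contractivity of the partial trace under the trace norm. So the content lies entirely in the forward direction, where one is given a correlated catalytic protocol $\Lambda$ and catalyst $\tau^C$ producing $\mu^{SC}=\Lambda(\rho^S\otimes\tau^C)$ with $\mu^C=\tau^C$ and $\|\mu^S-\ket{\psi}\!\bra{\psi}\|_1<\varepsilon$, and the task is to upgrade this to $\|\mu^{SC}-\ket{\psi}\!\bra{\psi}^S\otimes\tau^C\|_1<\varepsilon'$ using the \emph{same} $\Lambda$ and $\tau$.

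The key ingredient is a quantitative version of the folklore fact that a bipartite state with a pure marginal is necessarily a product. Concretely, for any $\mu^{SC}$ and any unit vector $\ket{\psi}$, whenever $\|\mu^S-\ket{\psi}\!\bra{\psi}\|_1\leq\varepsilon$ one obtains a bound of the form $\|\mu^{SC}-\ket{\psi}\!\bra{\psi}^S\otimes\mu^C\|_1\leq g(\varepsilon)$, with $g(\varepsilon)\to 0$ as $\varepsilon\to 0$ and independent of the dimension of $C$. I would prove this by decomposing $\mu^{SC}$ into the block form induced by the orthogonal projectors $P=\ket{\psi}\!\bra{\psi}$ and $Q=I-P$ on $S$, using $\operatorname{Tr}[Q\mu^S]\leq\varepsilon/2$ to control the diagonal block $Q\mu^{SC}Q$, and bounding the off-diagonal blocks by an operator Cauchy--Schwarz inequality of the form $\|P\mu^{SC}Q\|_1^2\leq\operatorname{Tr}[P\mu^{SC}P]\operatorname{Tr}[Q\mu^{SC}Q]$, which is essentially the gentle measurement lemma applied to $P$ on $S$. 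A short partial-trace computation then shows that $\operatorname{Tr}_S[P\mu^{SC}P]=\langle\psi|\mu^{SC}|\psi\rangle$ is trace-norm close to $\mu^C$.

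Combining this lemma with the exact catalysis identity $\mu^C=\tau^C$ yields $\|\mu^{SC}-\ket{\psi}\!\bra{\psi}^S\otimes\tau^C\|_1\leq g(\varepsilon)$, which is precisely the decoupling condition for the same $\Lambda$ and $\tau$. Since $g(\varepsilon)$ vanishes with $\varepsilon$, invoking the original correlated catalytic protocol at sufficiently small $\varepsilon$ yields decoupling at any prescribed precision. I expect the only nontrivial point to be the quantitative ``pure marginal implies product'' estimate; it is standard, but the main obstacle is to keep the bound dimension independent so that it remains valid for catalysts of arbitrary size.
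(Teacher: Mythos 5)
Your proposal is correct, and its overall architecture matches the paper's: both directions reduce the equivalence to a quantitative, dimension-independent ``approximately pure marginal implies approximately product'' estimate, which is then combined with the exact catalyst condition $\mu^C=\tau^C$ to obtain decoupling at a precision that vanishes with $\varepsilon$. Where you differ is in how that key estimate is established. The paper's Lemma~\ref{lem:UsefulLemma} goes through purifications: it converts the trace-norm hypothesis into a fidelity bound, reads off the largest Schmidt coefficient $\lambda_0>\sqrt{1-\varepsilon/2}$ of a purification of $\mu^S$, compares the purification of $\mu^{SC}$ with a product vector, and descends back to trace norm via monotonicity of fidelity under partial trace and the Fuchs--van de Graaf inequalities, arriving at the bound $\varepsilon+6\sqrt{\varepsilon/2}$. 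Your route stays entirely at the level of the density operator: with $P=\ket{\psi}\!\bra{\psi}$ and $Q=I-P$ on $S$, the hypothesis gives $\operatorname{Tr}[Q\mu^S]\leq\varepsilon/2$, the gentle measurement lemma (or the operator Cauchy--Schwarz bound $\|P\mu Q\|_1\leq\sqrt{\operatorname{Tr}[P\mu P]\operatorname{Tr}[Q\mu Q]}$ for the off-diagonal blocks) yields $\|\mu^{SC}-(P\otimes I)\mu^{SC}(P\otimes I)\|_1\leq 2\sqrt{\varepsilon/2}$, and since $(P\otimes I)\mu^{SC}(P\otimes I)=\ket{\psi}\!\bra{\psi}^S\otimes\langle\psi|\mu^{SC}|\psi\rangle$ with $\|\langle\psi|\mu^{SC}|\psi\rangle-\mu^C\|_1\leq 2\sqrt{\varepsilon/2}$ by contractivity of the partial trace, the triangle inequality gives $g(\varepsilon)=4\sqrt{\varepsilon/2}$. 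Both bounds are manifestly independent of the dimension of $C$, which is the one point that genuinely matters for catalysts of unbounded size; your argument is somewhat more elementary (no purifying system, no fidelity conversions) and even yields a slightly tighter constant, while the paper's fidelity-based argument is the more standard idiom in this literature. Either version suffices for the proposition.
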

\begin{proof}
Assume that the transformation $\rho \rightarrow \phi$ is possible via correlated catalysis, i.e., for any $\varepsilon>0$ 
there exists a catalyst state $\tau$ and an LOCC protocol $\Lambda$
such that Eqs.~(1) of the main text are fulfilled. Using Lemma~\ref{lem:UsefulLemma} which is given below, we see that the catalyst decouples in this procedure, and moreover 
\begin{equation}
\left\Vert \mu^{SC} - \phi^{S}\otimes\tau^{C}\right\Vert _{1}<\varepsilon+6\sqrt{\frac{\varepsilon}{2}}.
\end{equation}
This shows that the existence of a correlated catalytic transformation from $\rho$ into $\ket{\phi}$ implies that the transformation is also possible via correlated catalysis with decoupling. The converse is straightforward, noting that correlated catalysis is at least as powerful as correlated catalysis with decoupling in general.
\end{proof}
This proposition holds also for the resource theory of entanglement based on PPT operations.

Finally, we will now show that for pure target states in the bipartite setting, correlated catalysis with decoupling is equivalent to the notion of reducibility defined in~\cite{PhysRevA.63.012307}.
\begin{prop}
    The following statements are equivalent for any bipartite distillable state $\rho$ and any bipartite pure state $\ket{\phi}$:
    \begin{enumerate}
        \item $\rho$ is reducible onto $\ket{\phi}$
        \item $\rho$ can be converted into $\ket{\phi}$ via correlated catalysis with decoupling
        \item $E_\mathrm d(\rho) \geq S(\phi^A)$
    \end{enumerate}
\end{prop}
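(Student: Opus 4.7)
The plan is to close a triangle of implications, $(1)\Leftrightarrow(3)$ and $(1)\Leftrightarrow(2)$, by combining the propositions already established earlier in the Supplemental Material. All the heavy lifting has been done, so this is really a matter of threading the right statements together. The distillability of $\rho$ will be used implicitly when invoking Proposition~\ref{prop:Distillable} and Proposition~\ref{prop:RatesPureTarget}.

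The easiest implication is $(1)\Leftrightarrow(3)$. Reducibility of $\rho$ onto $\ket{\phi}$ in the sense of~\cite{PhysRevA.63.012307} is, by definition, the statement $R(\rho\rightarrow\phi)\geq 1$. By Eq.~(\ref{eq:RatePureTarget}) we have $R(\rho\rightarrow\phi)=E_\mathrm{d}(\rho)/S(\phi^A)$, so this is equivalent to $E_\mathrm{d}(\rho)\geq S(\phi^A)$. (The degenerate case $S(\phi^A)=0$, i.e. $\ket{\phi}$ product, makes $(1)$ and $(3)$ both trivially true and can be dismissed at the start.)

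For $(1)\Rightarrow(2)$ I would argue as follows. Standard reducibility immediately implies marginal reducibility, since a protocol that approximates $\sigma^{\otimes m}$ globally certainly approximates it on each marginal. Proposition~\ref{prop:ReducibilityImpliesCatalysis} then converts marginal reducibility of $\rho$ onto $\ket{\phi}$ into a correlated catalytic transformation $\rho\rightarrow\ket{\phi}$. Finally, Proposition~\ref{prop:CatalysisPure} upgrades this to correlated catalysis with decoupling, because the target is pure.

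The converse $(2)\Rightarrow(1)$ runs in the opposite direction. Correlated catalysis with decoupling is a strictly stronger notion than correlated catalysis, so (2) trivially implies that $\rho$ can be converted into $\ket{\phi}$ via correlated catalysis. Since $\rho$ is distillable, Proposition~\ref{prop:Distillable} gives marginal reducibility from $\rho$ onto $\ket{\phi}$, i.e.\ $R_\mathrm{m}(\rho\rightarrow\phi)\geq 1$. Applying Proposition~\ref{prop:PureTarget}, which collapses $R_\mathrm{m}$ to $R$ for pure targets, yields $R(\rho\rightarrow\phi)\geq 1$, which is precisely $(1)$. There is no real obstacle here, since every ingredient is already available; the only thing to be careful about is keeping track of the direction of the implication in each invoked proposition and noting that the distillability hypothesis is needed precisely at the step where Proposition~\ref{prop:Distillable} is used.
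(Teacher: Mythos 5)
Your proposal is correct and follows essentially the same route as the paper: the paper also chains Propositions~\ref{prop:ReducibilityImpliesCatalysis} and~\ref{prop:Distillable} (packaged as Theorem~\ref{thm:Main}) with Propositions~\ref{prop:PureTarget} and~\ref{prop:CatalysisPure} to get $(1)\Leftrightarrow(2)$, and obtains $(3)$ from Proposition~\ref{prop:RatesPureTarget}, which is the same content as your direct appeal to Eq.~(\ref{eq:RatePureTarget}).
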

\begin{proof}
    From Theorem~1 of the main text, we see that marginal reducibility from $\rho$ to $\ket{\phi}$ is equivalent to the existence of a correlated catalytic transformation from $\rho$ to $\ket{\phi}$. Proposition~\ref{prop:PureTarget} implies that in this setting marginal reducibility is equivalent to the notion of reducibility defined in~\cite{PhysRevA.63.012307}. Proposition~\ref{prop:CatalysisPure} further implies that correlated catalysis is equivalent to correlated catalysis with decoupling. This proves that conditions 1 and 2 are equivalent. The equivalence of condition 3 follows from Proposition~\ref{prop:RatesPureTarget}.
\end{proof}
The above proposition extends to the resource theory of entanglement based on PPT operations. In this setting, the following statements are equivalent for any bipartite state $\rho$ and any bipartite pure state $\ket{\phi}$:
\begin{enumerate}
        \item $\rho$ is reducible onto $\ket{\phi}$
        \item $\rho$ can be converted into $\ket{\phi}$ via correlated catalysis with decoupling
        \item $E_\mathrm d^\mathrm{PPT}(\rho) \geq S(\phi^A)$
\end{enumerate}
To see this, recall that for PPT entanglement theory Theorem~1 of the main text applies to all states. This means that marginal reducibility from $\rho$ to $\ket{\phi}$ is equivalent to the existence of a correlated catalytic transformation from $\rho$ to $\ket{\phi}$. Moreover, Eq.~(\ref{eq:RmPPT}) implies that in this setting marginal reducibility is equivalent to the notion of reducibility defined in~\cite{PhysRevA.63.012307}. Since Proposition~\ref{prop:CatalysisPure} applies to PPT entanglement theory, correlated catalysis is equivalent to correlated catalysis with decoupling in this case. This shows that conditions 1 and 2 are equivalent. The equivalence of condition 3 follows from Eq.~(\ref{eq:PPTrates}).

We complete this section with the following lemma.
\begin{lem} \label{lem:UsefulLemma}
For any quantum state $\mu^{SC}$ the inequality
    \begin{equation}
\left\Vert \mu^{S}-\ket{\phi}\!\bra{\phi}^{S}\right\Vert _{1}<\varepsilon \label{eq:Decoupling-1}
\end{equation}
implies that 
\begin{equation}
\left\Vert \mu^{SC} - \ket{\phi}\!\bra{\phi}^{S}\otimes\mu^{C}\right\Vert _{1}<\varepsilon+6\sqrt{\frac{\varepsilon}{2}}.\label{eq:DecouplingMain-1}
\end{equation}
\end{lem}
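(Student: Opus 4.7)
The plan is to leverage the fact that whenever a marginal of a bipartite state is close to a pure state, the full state must approximately factorize with that pure state as one tensor factor. This is essentially the content of Winter's gentle measurement lemma, and can be proven cleanly via a projective measurement argument on $\mu^{SC}$ together with a single triangle inequality.

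First, I would convert the trace-distance hypothesis into an overlap statement via the Fuchs--van de Graaf inequality. Applied to $\mu^S$ and $\ket{\phi}\!\bra{\phi}^S$, the hypothesis $\left\Vert \mu^{S}-\ket{\phi}\!\bra{\phi}^{S}\right\Vert _{1}<\varepsilon$ gives $\sqrt{\bra{\phi}\mu^{S}\ket{\phi}} = F(\mu^S, \ket{\phi}\!\bra{\phi}^S) > 1 - \varepsilon/2$, and squaring yields $\bra{\phi}\mu^{S}\ket{\phi} > (1-\varepsilon/2)^2 \geq 1 - \varepsilon$. Thus the two-outcome projective measurement $\{\ket{\phi}\!\bra{\phi}^S \otimes I^C,\; (I^S-\ket{\phi}\!\bra{\phi}^S)\otimes I^C\}$ on $\mu^{SC}$ yields the first outcome with probability exceeding $1-\varepsilon$.

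Second, I would apply the gentle measurement lemma to $\mu^{SC}$ with the projector $\ket{\phi}\!\bra{\phi}^{S}\otimes I^{C}$, which gives
\begin{equation}
\left\Vert \mu^{SC} - \ket{\phi}\!\bra{\phi}^{S}\otimes\tilde{\mu}^{C}\right\Vert _{1} \leq 2\sqrt{\varepsilon},
\end{equation}
where $\tilde{\mu}^{C} := \bra{\phi}^S \mu^{SC}\ket{\phi}^S$ is the (sub-normalized) conditional state on $C$ corresponding to the successful outcome. Then I would bound the discrepancy between $\tilde{\mu}^{C}$ and the true marginal $\mu^{C}$ directly: since
\begin{equation}
\mu^{C}-\tilde{\mu}^{C} = \mathrm{Tr}_{S}\!\left[\left((I^S-\ket{\phi}\!\bra{\phi}^{S})\otimes I^{C}\right)\mu^{SC}\right]
\end{equation}
is a positive operator with trace equal to $1 - \bra{\phi}\mu^{S}\ket{\phi} < \varepsilon$, we obtain $\left\Vert \mu^{C}-\tilde{\mu}^{C}\right\Vert _{1} < \varepsilon$. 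A single triangle inequality then yields $\left\Vert \mu^{SC} - \ket{\phi}\!\bra{\phi}^{S}\otimes\mu^{C}\right\Vert _{1} < \varepsilon + 2\sqrt{\varepsilon}$, which is actually slightly tighter than the announced bound $\varepsilon+6\sqrt{\varepsilon/2}$.

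There is no real obstacle here beyond constant bookkeeping: the conceptual input is entirely the standard ``pure marginal implies product'' principle. A fully equivalent route, which may be the one the authors have in mind and naturally produces the looser constants stated, proceeds through Uhlmann's theorem: purify $\mu^{SC}$ to some $\ket{\mu}^{SCR}$, use Uhlmann to obtain a purification $\ket{\phi}^S\otimes\ket{\chi}^{CR}$ of $\ket{\phi}\!\bra{\phi}^S$ whose overlap with $\ket{\mu}$ equals $F(\mu^S, \ket{\phi}\!\bra{\phi}^S)$, convert the resulting pure-state fidelity into a trace-distance bound $\left\Vert \ket{\mu}\!\bra{\mu} - \ket{\phi,\chi}\!\bra{\phi,\chi}\right\Vert _{1} \leq 2\sqrt{\varepsilon}$, trace out $R$ to bound $\left\Vert \mu^{SC} - \ket{\phi}\!\bra{\phi}^S\otimes\ket{\chi}\!\bra{\chi}^C\right\Vert _{1}$ and $\left\Vert \mu^{C} - \ket{\chi}\!\bra{\chi}^C\right\Vert _{1}$, and combine by triangle inequality. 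Either route produces a decoupling bound of the form $O(\sqrt{\varepsilon})$, as required.
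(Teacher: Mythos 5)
Your argument is correct and in fact lands on a strictly tighter bound ($\varepsilon+2\sqrt{\varepsilon}$, up to the precise constant in the gentle-operator lemma) than the stated $\varepsilon+6\sqrt{\varepsilon/2}$, so it certainly establishes the lemma. It does, however, take a genuinely different route from the paper. The paper does not invoke the gentle measurement lemma: it purifies $\mu^{S}$ as $\sum_i\lambda_i\ket{i}^{S}\ket{i}^{T}$, extends this to a purification $\ket{\nu}^{SCD}=\sum_i\lambda_i\ket{i}^{S}\ket{\alpha_i}^{CD}$ of $\mu^{SC}$, deduces $\lambda_0>\sqrt{1-\varepsilon/2}$ from the fidelity bound, and compares $\ket{\nu}$ with the product $\ket{0}^{S}\otimes\ket{\alpha_0}^{CD}$ using monotonicity of fidelity under partial trace. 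Because that comparison is made with the \emph{leading Schmidt vector} $\ket{0}$ of $\mu^{S}$ rather than with $\ket{\phi}$ itself, the paper needs two further triangle-inequality steps --- bounding $\Vert\mu^{C}-\mathrm{Tr}_D[\ket{\alpha_0}\!\bra{\alpha_0}]\Vert_1$ and $\Vert\ket{0}\!\bra{0}-\ket{\phi}\!\bra{\phi}\Vert_1$ --- and this detour is where the constant $6\sqrt{\varepsilon/2}$ accumulates. Your projective decomposition with $\Pi=\ket{\phi}\!\bra{\phi}^{S}\otimes I^{C}$ avoids the detour entirely: the conditional state $\tilde{\mu}^{C}=\bra{\phi}\mu^{SC}\ket{\phi}$ is compared to $\mu^{C}$ directly via positivity and trace of their difference, which is both more elementary and sharper. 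Your second sketch (Uhlmann through a purification $\ket{\phi}^{S}\otimes\ket{\chi}^{CR}$) is closer in spirit to the paper's proof, but still cleaner since it again targets $\ket{\phi}$ directly. One minor caveat on bookkeeping: depending on which version of the gentle-operator lemma you cite, the intermediate bound is $2\sqrt{\varepsilon}+\varepsilon$ or $2\sqrt{2\varepsilon}$ rather than $2\sqrt{\varepsilon}$; all of these still land comfortably below $\varepsilon+6\sqrt{\varepsilon/2}$ on the relevant range $\varepsilon\leq 2$ (and the claim is vacuous beyond it), so the lemma follows either way.
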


\begin{proof}
Note that Eq.~(\ref{eq:Decoupling-1}) implies the
inequality 
\begin{align}
F\left(\mu^{S},\ket{\phi}\!\bra{\phi}^{S}\right) & \geq\sqrt{1-\frac{1}{2}\left\Vert \mu^{S}-\ket{\phi}\!\bra{\phi}^{S}\right\Vert _{1}}\nonumber \\
 & >\sqrt{1-\frac{\varepsilon}{2}}\label{eq:Decoupling-3}
\end{align}
with fidelity $F(\rho,\sigma)=\mathrm{Tr}\sqrt{\sqrt{\rho}\sigma\sqrt{\rho}}$.
The state $\mu^{S}$ has a purification 
\begin{equation}
\ket{\mu}^{ST}=\sum_{i}\lambda_{i}\ket{i}^{S}\ket{i}^{T}\label{eq:Decoupling-2}
\end{equation}
with the Schmidt coefficient $\lambda_{i}$ sorted in decreasing order.
Due to Eq.~(\ref{eq:Decoupling-3}) we have 
\begin{equation}
\lambda_{0}>\sqrt{1-\frac{\varepsilon}{2}}.\label{eq:Decoupling4}
\end{equation}
Let now $\ket{\nu}^{SCD}$ be a purification of $\mu^{SC}$, and observe
that it can be written as 
\begin{equation}
\ket{\nu}^{SCD}=\sum_{i}\lambda_{i}\ket{i}^{S}\ket{\alpha_{i}}^{CD},
\end{equation}
where $\lambda_{i}$ are the same Schmidt coefficients as in Eq.~(\ref{eq:Decoupling-2})
and $\{\ket{\alpha_{i}}\}$ is an orthonormal basis on $CD$. Noting
that 
\begin{equation}
F\left(\ket{\nu}\!\bra{\nu}^{SCD},\ket{0}\!\bra{0}^{S}\otimes\ket{\alpha_{0}}\!\bra{\alpha_{0}}^{CD}\right)=\lambda_{0},
\end{equation}
and using the fact that the fidelity does not decrease under partial
trace we obtain 
\begin{equation}
F\left(\mu^{SC},\ket{0}\!\bra{0}^{S}\otimes\mathrm{Tr}_{D}\left[\ket{\alpha_{0}}\!\bra{\alpha_{0}}^{CD}\right]\right)>\sqrt{1-\frac{\varepsilon}{2}}.
\end{equation}
Using the inequality $||\rho-\sigma||_{1}/2\leq\sqrt{1-F(\rho,\sigma)^{2}}$
we arrive at 
\begin{equation}
\frac{1}{2}\left\Vert \mu^{SC}-\ket{0}\!\bra{0}^{S}\otimes\mathrm{Tr}_{D}\left[\ket{\alpha_{0}}\!\bra{\alpha_{0}}^{CD}\right]\right\Vert _{1}<\sqrt{\frac{\varepsilon}{2}}.\label{eq:Decoupling-7}
\end{equation}
Noting that the trace norm does not increase under partial trace we obtain 
\begin{align}
\left\Vert \mu^{C}-\mathrm{Tr}_{D}\left[\ket{\alpha_{0}}\!\bra{\alpha_{0}}^{CD}\right]\right\Vert _{1}<2\sqrt{\frac{\varepsilon}{2}}.
\end{align}
We now use the triangle
inequality, arriving at 
\begin{align}
 & \left\Vert \mu^{SC}-\ket{0}\!\bra{0}^{S}\otimes\mu^{C}\right\Vert _{1}\leq\left\Vert \mu^{SC}-\ket{0}\!\bra{0}^{S}\otimes\mathrm{Tr}_{D}\left[\ket{\alpha_{0}}\!\bra{\alpha_{0}}^{CD}\right]\right\Vert _{1}\nonumber \\
 & +\left\Vert \ket{0}\!\bra{0}^{S}\otimes\mathrm{Tr}_{D}\left[\ket{\alpha_{0}}\!\bra{\alpha_{0}}^{CD}\right]-\ket{0}\!\bra{0}^{S}\otimes\mu^{C}\right\Vert _{1}<4\sqrt{\frac{\varepsilon}{2}}.
\end{align}
Using again Eq.~(\ref{eq:Decoupling-7}) we find 
\begin{equation}
\left\Vert \mu^{S}-\ket{0}\!\bra{0}^{S}\right\Vert _{1}<2\sqrt{\frac{\varepsilon}{2}},
\end{equation}
which together with Eq.~(\ref{eq:Decoupling-1}) and triangle inequality
implies that 
\begin{equation}
\left\Vert \ket{\phi}\!\bra{\phi}^{S}-\ket{0}\!\bra{0}^{S}\right\Vert _{1}<\varepsilon+2\sqrt{\frac{\varepsilon}{2}}.
\end{equation}
Using once again the triangle inequality we obtain Eq.~(\ref{eq:DecouplingMain-1}):
\begin{align}
 & \left\Vert \mu^{SC}-\ket{\phi}\!\bra{\phi}^{S}\otimes\mu^{C}\right\Vert _{1}\leq\left\Vert \mu^{SC}-\ket{0}\!\bra{0}^{S}\otimes\mu^{C}\right\Vert _{1}\nonumber \\
 & +\left\Vert \ket{0}\!\bra{0}^{S}\otimes\mu^{C}-\ket{\phi}\!\bra{\phi}^{S}\otimes\mu^{C}\right\Vert _{1}<\varepsilon+6\sqrt{\frac{\varepsilon}{2}}.
\end{align}
\end{proof}

\section{Proof of Theorem~2}

Let us introduce the \emph{catalytic distillable entanglement} 
\begin{equation}
E_{\mathrm{cd}}(\rho)=R_{\mathrm{c}}(\rho\rightarrow\psi^{-}),
\end{equation}
that is, the optimal rate of obtaining singlets with the help of a correlated catalyst. Theorem~2 of the main text states that catalytic distillable entanglement coincides with the standard distillable entanglement for any distillable state, i.e.,
\begin{equation}
E_{\mathrm{d}}(\rho) = E_{\mathrm{cd}}(\rho).
\end{equation}
This is a direct consequence of Proposition~\ref{prop:RatesPureTarget}, recalling that standard distillable entanglement can be written as $E_{\mathrm{d}}(\rho)=R(\rho\rightarrow\psi^{-})$.

A similar statement can be obtained for entanglement theory based on PPT operations. In particular, for any bipartite state $\rho$ it holds that 
\begin{equation}
E_{\mathrm{d}}^{\mathrm{PPT}}(\rho)=E_{\mathrm{cd}}^{\mathrm{PPT}}(\rho),
\end{equation}
where $E_{\mathrm{cd}}^{\mathrm{PPT}}(\rho)=R_{\mathrm{c}}^{\mathrm{PPT}}(\rho\rightarrow\psi^{-})$ is the catalytic distillable entanglement via PPT operations. This follows directly from Eqs.~(\ref{eq:PPTrates}).

\section{Extending Theorems~1 and 2 to multipartite settings}

We can generalize Theorems~1 and 2 of the main text into a multipartite setting as follows:
let us consider distillation into singlets shared between any pair of parties.
Let us denote the parties as $A_1, \ldots, A_n$, and let us call all pairs $\mathbb{P} = \Bqty{\pqty{A_i, A_j} \left| \, i < j \right.}$.
The state that we would like to distill is $\Phi = \bigotimes_{p \in \mathbb{P}} \psi^-_{p}$ (see also Fig.~2 of the main text), and it lives in $\mathcal{H}_{\textrm{total}} = \bigotimes_{p \in \mathbb{P}} \mathcal{H}_p = \bigotimes_{\pqty{A_i, A_j} \in \mathbb{P}} A_i \otimes A_j$.
Note that to each party, we associate a different subsystem for every pair, so the total number of subsystems is $n(n-1)/2$.
Let us call the optimal standard asymptotic rate $E_\mathrm{d}(\rho) = R(\rho \to \Phi)$, and states for which $E_\mathrm{d}\pqty{\rho} > 0$ distillable.
Theorem~1 of the main text follows from Proposition~\ref{prop:ReducibilityImpliesCatalysis} and \ref{prop:Distillable}.
Since Proposition~\ref{prop:ReducibilityImpliesCatalysis} does not assume any bipartite structure, it also holds in the multipartite setting.
Furthermore, for any state that is distillable, the construction provided in the proof of Proposition~\ref{prop:Distillable} also works, so in combination we have Theorem~1 of the main text.

To obtain Theorem~2 of the main text, we relied on Proposition~\ref{prop:Rates} and \ref{prop:PureTarget}.
It is clear that Proposition~\ref{prop:Rates} extends easily to this setting, so we only have to show that Proposition~\ref{prop:PureTarget} also holds, at least around $\Phi$.
In a multipartite setting, $E_\mathrm{d}$ is still monotonic, additive under tensor products and superadditive.
Therefore, it is enough to show that it is lower semi-continuous near $\Phi$.
Now, suppose we have an LOCC map $T$ that maps a state $\rho$ into $\mathcal{H}_{\textrm{total}}$.
Let $r = \min_{p \in \mathbb{P}} E_\mathrm{d}^p (T(\rho))$, where $E_\mathrm{d}^p$ is the bipartite distillable entanglement between the factors in $p$.
Then, for any pair of parties we can obtain at least $r$ singlets per copy of $\rho$.
Furthermore, these distillation protocols can be run independently because they are acting on different factors of $\mathcal{H}_{\textrm{total}}$.
Therefore, we have $E_\mathrm{d} \pqty{\rho} \geq \min_{p \in \mathbb{P}} E_\mathrm{d}^p (T(\rho))$.
Since $E_\mathrm{d}^p$ is simply bipartite distillable entanglement, the hashing bound~\cite{Devetak2005} provides a continuous lower bound.
Therefore we would be finished if we can show that it is tight for $\Phi$.
Now, $\Phi$ contains a singlet for every pair of parties, so we can choose $T$ that separates these singlets for each party.
We can verify that in this case, the hashing bound gives $E_\mathrm{d} \pqty{\Phi} \geq 1$, which is indeed tight.
Therefore, Theorem~2 of the main text also holds in a multipartite setting.

It is reasonable to ask whether the notion of multipartite states considered here is related to the notion of genuine multipartite entanglement. In more detail, a biseparable state is a multipartite state which can be written as a convex combination of product states in some bipartitions. Any state which is not biseparable is called genuinely multipartite entangled~\cite{PhysRevA.65.012107,GUHNE20091,Palazuelos2022}. We note that there exist biseparable states which can be distilled into singlets in any bipartition, see e.g. Eq.~(49) in~\cite{GUHNE20091}. Moreover, there exist genuinely multipartite entangled states which are PPT in all bipartitions, and thus cannot be distilled into pure entangled states~\cite{PhysRevA.75.012305,PhysRevLett.102.170503}. We thus conclude that the notion of genuine multipartite entanglement is different from the notion of being distillable into singlets between each pair of parties.

\section{Superadditivity of asymptotic transformation rates} \label{sec:Superadditivity}

We will show that the transformation rate to go to any pure state $\phi$ is superadditive.
In particular, this means that the bipartite distillable entanglement is superadditive, and the same holds true for the multipartite distillable entanglement as defined above. In the following, $S_1$ and $S_2$ denote two (possibly multipartite) systems.
\begin{prop} \label{prop:EdSuperadditive}
    For any state $\mu^{S_{1} S_{2}}$ and any pure state $\phi$, we have
    \begin{align}
        R\pqty{\mu^{S_{1} S_{2}} \to \phi}
        &\geq
        R\pqty{\mu^{S_{1}} \to \phi} + R\pqty{\mu^{S_2} \to \phi}.
    \end{align}
\end{prop}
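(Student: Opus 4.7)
The plan is parallel composition: a copy of $\mu^{S_1 S_2}$ furnishes a copy of each marginal $\mu^{S_1}$ and $\mu^{S_2}$ on disjoint subsystems, so given $n$ copies of $\mu^{S_1 S_2}$ I can simultaneously run independent LOCC distillation protocols on the $S_1$-registers and on the $S_2$-registers. The pure-state target $\ket{\phi}$ will then make the two output blocks decouple automatically, yielding a single protocol whose rate is the sum of the two individual rates.

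Concretely, fix achievable rates $r_i < R(\mu^{S_i} \to \phi)$ and tolerances $\varepsilon, \delta > 0$. For any $\eta > 0$, by the definition of transformation rate (together with a standard blockwise rescaling to align the two values of $n$) there exist a common integer $n$, LOCC protocols $\Lambda_i$, and integers $m_i$ such that $\|\Lambda_i((\mu^{S_i})^{\otimes n}) - \phi^{\otimes m_i}\|_1 < \eta$ and $m_i/n + \delta/2 > r_i$. I would then define $\Lambda = \Lambda_1 \otimes \Lambda_2$, where $\Lambda_i$ acts on the $n$ $S_i$-subsystems of $(\mu^{S_1 S_2})^{\otimes n}$; this composite is LOCC because $\Lambda_1$ and $\Lambda_2$ are each LOCC between the same parties and act on disjoint registers. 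Writing $\nu = \Lambda((\mu^{S_1 S_2})^{\otimes n})$, the fact that $\Lambda_2$ only touches $S_2$-registers gives $\nu^{S_1'} = \Lambda_1((\mu^{S_1})^{\otimes n})$, so $\|\nu^{S_1'} - \phi^{\otimes m_1}\|_1 < \eta$ and similarly $\|\nu^{S_2'} - \phi^{\otimes m_2}\|_1 < \eta$.

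The remaining step is decoupling. Since $\phi^{\otimes m_1}$ is pure and $\nu^{S_1'}$ is $\eta$-close to it, Lemma~\ref{lem:UsefulLemma} applied with $S_2'$ playing the role of the catalyst $C$ yields $\|\nu - \phi^{\otimes m_1} \otimes \nu^{S_2'}\|_1 < \eta + 6\sqrt{\eta/2}$. Combining this with $\|\nu^{S_2'} - \phi^{\otimes m_2}\|_1 < \eta$ via the triangle inequality gives $\|\nu - \phi^{\otimes(m_1 + m_2)}\|_1 < 2\eta + 6\sqrt{\eta/2}$. Choosing $\eta$ small enough makes this less than $\varepsilon$, while $(m_1 + m_2)/n + \delta > r_1 + r_2$. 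Hence $r_1 + r_2$ is achievable for $\mu^{S_1 S_2} \to \phi$, and taking the supremum over $r_1, r_2$ proves the claim.

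The main obstacle, and the only place where the pure-state hypothesis is essential, is this decoupling step: if $\phi$ were mixed, parallel distillation of the two marginals could leave residual inter-register correlations that prevent the joint output from approximating $\phi^{\otimes m_1} \otimes \phi^{\otimes m_2}$ in trace norm, and the strong superadditivity in this form could fail. Lemma~\ref{lem:UsefulLemma} is exactly what converts closeness of a single marginal to a pure state into an approximate tensor-product factorization of the full state. The minor bookkeeping of aligning the number of copies used in the two marginal protocols is routine in asymptotic resource theories and poses no conceptual difficulty.
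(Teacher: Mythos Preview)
Your proof is correct and follows essentially the same approach as the paper: parallel composition of the two marginal protocols on a common block length, followed by the decoupling Lemma~\ref{lem:UsefulLemma} to exploit purity of $\phi^{\otimes m_1}$, and a final triangle inequality. The only cosmetic difference is ordering---the paper applies Lemma~\ref{lem:UsefulLemma} to $\Lambda_1\otimes\openone_2$ and then pushes $\Lambda_2$ through via data processing, whereas you apply $\Lambda_1\otimes\Lambda_2$ first and invoke the lemma on the final state; both routes are equally valid.
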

\begin{proof}
Let us take a feasible rate $r_i < R\pqty{\mu^{S_i} \to \phi}$ and show that $r_1 + r_2$ is a feasible rate for the transformation $\mu^{S_1 S_2} \to \phi$.
To do so, we have to show that for any $\varepsilon, \delta > 0$, there exist $m, n$, and an LOCC protocol $\Lambda$ such that
\begin{subequations}\label{eq:feasible-rate}
\begin{align}
    \norm{
    \Lambda \pqty{\pqty{\mu^{S_1 S_2}}^{\otimes n}} - \ketbra{\phi}^{\otimes m}
    }_1
    < \varepsilon,
    \\
    \frac{m}{n} + \delta
    > r_1 + r_2.
\end{align}
\end{subequations}

Fix an arbitrary $\varepsilon, \delta > 0$.
Without loss of generality, let us assume $\varepsilon < 1$.
In the following, we denote the space of $S_{i}^{\otimes n}$ as $\mathbb{S}_i$, where $i\in\{1,2\}$.
Note that since $r_i$ are feasible, there exist $m_i$, $n$ and some LOCC protocol $\Lambda_{i}$ such that
\begin{subequations}
\begin{align}
&\left\Vert \Lambda_{i}\left(\left(\mu^{S_i}\right)^{\otimes n}\right)-\ketbra{\phi}^{\otimes m_i}\right\Vert_1 < \frac{\varepsilon^2}{100} < \frac{\varepsilon}{2} ,\label{close}\\
&\frac{m_i}{n}+ \frac{\delta}{2} >r_i.\label{number}
\end{align}
\end{subequations}
Here, $\Lambda_{i}$ is a LOCC map acting on the space $\mathbb{S}_i\equiv S_{i}^{\otimes n_i}$ where $i\in\{1,2\}$.
Note that without loss of generality, we can choose the same $n$ for both systems since otherwise we can take the product $n = n_1 n_2$ and update $m_i, \Lambda_i$ accordingly.
From, Lemma \ref{lem:UsefulLemma} and Eq. (\ref{close}), it follows that
\begin{align*}
    \left\Vert \Lambda_{1}\otimes\openone_{2}\left(\left(\mu^{S_1S_2}\right)^{\otimes n}\right)-\ketbra{\phi}^{\otimes m_1}\otimes \left(\mu^{S_2}\right)^{\otimes n}\right\Vert_1
    &< \frac{\varepsilon^2}{100} + 6 \frac{\varepsilon}{\sqrt{200}}
    \\
    &< \frac{\varepsilon}{2}.
\end{align*}
Using the data processing inequality of trace norm, it follows that
\begin{eqnarray}
\left\Vert \Lambda_{1}\otimes\Lambda_{2}\left(\left(\mu^{S_1S_2}\right)^{\otimes n}\right)-\ketbra{\phi}^{\otimes m_1}\otimes\Lambda_{2}\left(\left(\mu^{S_2}\right)^{\otimes n}\right)\right\Vert_1
< \varepsilon/2 \nonumber.
    \end{eqnarray}
Using the triangle inequality along with Eq. (\ref{close}), we get
\begin{align}
 &\left\Vert\Lambda_{1}\otimes\Lambda_{2}\left(\left(\mu^{S_1S_2}\right)^{\otimes n}\right)-\ketbra{\phi}^{\otimes (m_1+m_2)}\right\Vert_1\leq\nonumber\\ &\left\Vert \Lambda_{1}\otimes\Lambda_{2}\left(\left(\mu^{S_1S_2}\right)^{\otimes n}\right)-\ketbra{\phi}^{\otimes m_1}\otimes\Lambda_{2}\left(\left(\mu^{S_2}\right)^{\otimes n}\right)\right\Vert_1\nonumber \\
 & +\left\Vert\ketbra{\phi}^{\otimes m_1}\otimes\Lambda_{2}\left(\left(\mu^{S_2}\right)^{\otimes n}\right)-\ketbra{\phi}^{\otimes (m_1+m_2)}\right\Vert _{1}\nonumber\\
 &<\varepsilon.\label{epsilon1}
\end{align}
Also note that, from Eq. (\ref{number})
\begin{eqnarray}
    \frac{m_1+m_2}{n}+\delta>r_1 + r_2.\label{delta}
\end{eqnarray}
We see that the Eqs.~(\ref{eq:feasible-rate}) are satisfied by choosing $m = m_1 + m_2$, $\Lambda = \Lambda_1 \otimes \Lambda_2$.
Therefore, $R\pqty{\mu^{S_{1}} \to \phi} + R\pqty{\mu^{S_2} \to \phi}$ is a feasible rate for the transformation $\mu^{S_1 S_2} \to \phi$, and the claim is shown.
\end{proof}
The above proposition holds also for the resource theory of entanglement based on PPT operations, i.e., for any state $\mu^{S_1S_2}$ and any pure state $\ket{\phi}$ we have
\begin{equation}
R^{\mathrm{PPT}}\pqty{\mu^{S_{1}S_{2}}\to\phi}\geq R^{\mathrm{PPT}}\pqty{\mu^{S_{1}}\to\phi}+R^{\mathrm{PPT}}\pqty{\mu^{S_{2}}\to\phi}.
\end{equation}

\section{Properties of catalytic and marginal transformation rates}

In the following we will provide a general upper bound on the catalytic and marginal transformation rates in terms of the squashed entanglement of the corresponding quantum states. Squashed entanglement is defined as~\cite{Christandl_2004}
\begin{equation}
    E_\mathrm{sq}(\rho^{AB}) = \inf \left\{ \frac{1}{2} I(A;B|E):\rho^{ABE} \mathrm{\,\,extension\,\,of\,\,} \rho^{AB} \right\},
\end{equation}
with the quantum conditional mutual information $I(A;B|E) = S(\rho^{AE}) + S(\rho^{BE}) - S(\rho^{ABE})-S(\rho^E)$.

\begin{prop} \label{prop:RatesBound}
The rates $R_{\mathrm{m}}$ and $R_{\mathrm{mc}}$ are bounded
as 
\begin{equation}
R_{\mathrm{m}}\left(\rho\rightarrow\sigma\right)\leq R_{\mathrm{mc}}\left(\rho\rightarrow\sigma\right)\leq\frac{E_{\mathrm{sq}}\left(\rho\right)}{E_{\mathrm{sq}}\left(\sigma\right)}.
\end{equation}
\end{prop}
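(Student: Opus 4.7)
The first inequality $R_{\mathrm{m}} \leq R_{\mathrm{mc}}$ is immediate, since a marginal transformation is a marginal catalytic transformation with trivial catalyst. The substantive part of the proposition is the upper bound $R_{\mathrm{mc}}(\rho \to \sigma) \leq E_{\mathrm{sq}}(\rho)/E_{\mathrm{sq}}(\sigma)$, and my plan is to deduce it from the standard toolkit of squashed entanglement: LOCC monotonicity, additivity on tensor products, superadditivity on the bipartition of many copies, and asymptotic continuity.

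Concretely, I would fix any achievable rate $r < R_{\mathrm{mc}}(\rho \to \sigma)$ and, for arbitrary $\varepsilon, \delta > 0$, pick integers $m, n$, a catalyst $\tau^C$, and an LOCC protocol $\Lambda$ satisfying Eqs.~(\ref{eq:MarginalCatalytic}). Applying LOCC monotonicity of $E_{\mathrm{sq}}$ to $\Lambda$, together with additivity under tensor products, yields
\begin{equation}
E_{\mathrm{sq}}\bigl(\mu^{S_{1}\ldots S_{m}C}\bigr) \leq E_{\mathrm{sq}}\bigl(\rho^{\otimes n} \otimes \tau^{C}\bigr) = n\, E_{\mathrm{sq}}(\rho) + E_{\mathrm{sq}}(\tau).
\end{equation}
Then I would invoke the superadditivity of squashed entanglement on the bipartition between Alice's and Bob's systems to obtain
\begin{equation}
E_{\mathrm{sq}}\bigl(\mu^{S_{1}\ldots S_{m}C}\bigr) \geq \sum_{i=1}^{m} E_{\mathrm{sq}}\bigl(\mu^{S_{i}}\bigr) + E_{\mathrm{sq}}\bigl(\mu^{C}\bigr).
\end{equation}
The catalytic condition $\mu^{C} = \tau^{C}$ means the $E_{\mathrm{sq}}(\tau)$ terms cancel between the two inequalities, giving the clean bound $\sum_{i=1}^{m} E_{\mathrm{sq}}(\mu^{S_{i}}) \leq n\, E_{\mathrm{sq}}(\rho)$. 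This is precisely where correlated catalysis is tamed: the catalyst leaves no residual contribution, exactly as in the familiar entropy-based proofs of no-go results for catalysis.

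To return to the target state $\sigma$, I would use asymptotic continuity of $E_{\mathrm{sq}}$: since $\|\mu^{S_{i}} - \sigma\|_{1} < \varepsilon$ and the local dimension of $\sigma$ is fixed, there is a function $f(\varepsilon)$ with $f(\varepsilon) \to 0$ as $\varepsilon \to 0$ such that $E_{\mathrm{sq}}(\mu^{S_{i}}) \geq E_{\mathrm{sq}}(\sigma) - f(\varepsilon)$ for every $i$. Substituting and dividing by $n$ yields
\begin{equation}
\frac{m}{n}\bigl(E_{\mathrm{sq}}(\sigma) - f(\varepsilon)\bigr) \leq E_{\mathrm{sq}}(\rho),
\end{equation}
which combined with the rate condition $m/n + \delta > r$ gives $r \leq E_{\mathrm{sq}}(\rho)/(E_{\mathrm{sq}}(\sigma) - f(\varepsilon)) + \delta$. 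Taking $\varepsilon, \delta \to 0$ and then the supremum over $r$ delivers the claim, with the convention that the bound is vacuous when $E_{\mathrm{sq}}(\sigma) = 0$.

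The main obstacle is simply assembling the right ingredients rather than inventing new machinery: one must be sure that the version of superadditivity being used is the one that applies to a single multipartite state with the natural $A$-vs-$B$ bipartition (not merely to tensor products of independent states), and that asymptotic continuity is quantitative enough to survive being applied to each of the $m$ marginals while only $n$ copies of $\rho$ are spent. Both hold for squashed entanglement: Christandl--Winter-type superadditivity extends to any state on $A_{1}\ldots A_{m}|B_{1}\ldots B_{m}$, and Alicki--Fannes-type continuity gives a dimension-dependent $f(\varepsilon)$ that depends only on the fixed local dimension of $\sigma$, so the factor $m/n$ multiplying $f(\varepsilon)$ remains controlled throughout the limit.
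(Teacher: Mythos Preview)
Your proposal is correct and follows essentially the same route as the paper: LOCC monotonicity and additivity of $E_{\mathrm{sq}}$ on the input, superadditivity across the $S_i$ and $C$ factors on the output, cancellation of the catalyst contribution via $\mu^C=\tau^C$, and continuity of $E_{\mathrm{sq}}$ (Alicki--Fannes) to pass from $\mu^{S_i}$ to $\sigma$. The only cosmetic difference is that the paper packages the continuity step into an auxiliary ``squashed transformation rate'' $R_{\mathrm{sq}}$ (defined by $|E_{\mathrm{sq}}(\mu^{S_i})-E_{\mathrm{sq}}(\sigma)|<\varepsilon$ in place of the trace-norm condition) and then bounds $R_{\mathrm{sq}}$, whereas you apply continuity directly to each marginal; the underlying estimates are identical.
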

\begin{proof}
We introduce a slightly different version of the transformation rate, which we will call \emph{squashed transformation rate}. In this framework, we say that a state $\rho$ can be converted into $\sigma$ with rate $r$ if the following inequalities are fulfilled for all $\varepsilon,\delta > 0$:\begin{subequations} \label{eq:SquashedRate}
\begin{align}
\Lambda\left(\rho^{\otimes n}\otimes\tau^{C}\right) & =\mu^{S_{1}\ldots S_{m}C},\\
\left|E_{\mathrm{sq}}(\mu^{S_{i}})-E_{\mathrm{sq}}(\sigma)\right| & <\varepsilon\,\,\,\forall i\leq m,\\
\mu^{C} & =\tau^{C},\\
\frac{m}{n}+\delta & >r.
\end{align}
\end{subequations}
The squashed transformation rate is the maximal such rate, and it will be denoted by $R_\mathrm{sq}$. By continuity of squashed entanglement~\cite{Alicki_2004}, it is clear that $R_{\mathrm{sq}}(\rho\rightarrow\sigma)\geq R_{\mathrm{mc}}(\rho\rightarrow\sigma)$.

Consider now an LOCC protocol $\Lambda$ and a catalyst state $\tau$ achieving Eqs.~(\ref{eq:SquashedRate}). Using the properties of squashed entanglement~\cite{Christandl_2004} we find
\begin{align}
nE_{\mathrm{sq}}\left(\rho\right)+E_{\mathrm{sq}}\left(\tau\right) & =E_{\mathrm{sq}}\left(\rho^{\otimes n}\otimes\tau^{C}\right)\geq E_{\mathrm{sq}}\left(\mu^{S_{1}\ldots S_{m}C}\right)\\
 & \geq\sum_{i=1}^{m}E_{\mathrm{sq}}\left(\mu^{S_{i}}\right)+E_{\mathrm{sq}}\left(\tau\right),\nonumber 
\end{align}
and thus 
\begin{align}
nE_{\mathrm{sq}}\left(\rho\right) & \geq\sum_{i=1}^{m}E_{\mathrm{sq}}\left(\mu^{S_{i}}\right).
\end{align}
Using this inequality and Eqs.~(\ref{eq:SquashedRate}) we further obtain
\begin{equation}
nE_{\mathrm{sq}}\left(\rho\right) > m\left[E_{\mathrm{sq}}\left(\sigma\right)-\varepsilon\right],
\end{equation}
which implies that 
\begin{equation}
\frac{m}{n} < \frac{E_{\mathrm{sq}}\left(\rho\right)}{E_{\mathrm{sq}}\left(\sigma\right)-\varepsilon}.
\end{equation}
Using Eqs.~(\ref{eq:SquashedRate}) once again we arrive at 
\begin{equation}
r<\frac{E_{\mathrm{sq}}\left(\rho\right)}{E_{\mathrm{sq}}\left(\sigma\right)-\varepsilon}+\delta.
\end{equation}
Recalling that $\varepsilon,\delta > 0$ can be chosen arbitrarily, we conclude that
\begin{equation}
R_{\mathrm{sq}}\left(\rho\rightarrow\sigma\right)\leq\frac{E_{\mathrm{sq}}\left(\rho\right)}{E_{\mathrm{sq}}\left(\sigma\right)},
\end{equation}
and the proof is complete.
\end{proof}

We note that the inequality $R_{\mathrm{m}}(\rho\rightarrow\sigma)\leq E_{\mathrm{sq}}(\rho)/E_{\mathrm{sq}}(\sigma)$ has been proven previously in~\cite{Ferrari2023}.

\section{Full equivalence between catalysis and reducibility in PPT entanglement theory}

Now we will show that for the theory of PPT entanglement, correlated catalysis is exactly as powerful as reducibility on the marginals, i.e., Theorem 1 of the main text holds for all bipartite states $\rho$ and $\sigma$. By modifying the arguments that we have presented to use PPT operations instead of LOCC, we have shown that if $\rho$ is distillable, then for any $\sigma$, $\rho$ is reducible to $\sigma$ in the marginals if and only if $\rho$ can be converted to $\sigma$ via correlated catalysis. Recall that in PPT entanglement theory all non-PPT states are distillable~\cite{Eggeling_2001}.
Therefore to show the full equivalence between correlated catalysis and reducibility in the marginals, what remains is to show that when $\rho$ is PPT, then for any $\sigma$, $\rho$ is reducible to $\sigma$ in the marginals if and only if $\rho$ can be converted to $\sigma$ via correlated catalysis.

Let us choose an arbitrary PPT state $\rho$.
We know that $\rho$ is reducible to $\sigma$ in the marginals if and only if $\sigma$ is PPT, because the set of PPT states is closed under taking tensor products, partial trace, and application of PPT operations.
Therefore, we will be done if we show that we cannot transform $\rho$ into a non-PPT state $\sigma$ by correlated catalysis.
Ref.~\cite[Theorem 2]{Lami2023b} fills exactly this gap.
Here, we provide another proof for completeness.
\begin{prop}
Let $\rho$ be a PPT state.
Then $\rho$ can be converted into $\sigma$ with correlated catalysis if and only if $\sigma$ is PPT.
\end{prop}
\begin{proof}
    If $\sigma$ is PPT, there is a PPT operation that transform any state into $\sigma$.
    Therefore, we only need to show that if $\rho$ can be transformed into $\sigma$ with correlated catalysis, then $\sigma$ is PPT.
    Indeed, suppose that $\rho$ can be transformed into $\sigma$ with correlated catalyst.
    We will show that for any $\varepsilon > 0$, there exists a PPT state that is $\varepsilon$-close to $\sigma$.
    Since the set of PPT states are closed, this allows us to conclude that $\sigma$ is also PPT.
    
    Let us fix an $\varepsilon > 0$.
    By definition, there is a catalyst $\tau$ and a PPT operation $\Lambda$ such that $\mu^{SC} = \Lambda{\pqty{\rho^S \otimes \tau^C}}$ satisfies
    \begin{subequations}
    \begin{align}
        \mu^{C} &= \tau,
        \\
        \norm{ \mu^S - \sigma } &\leq \varepsilon.
    \end{align}
    \end{subequations}
    We will show that $\mu^S$ is a PPT state that is $\varepsilon$-close to $\sigma$.
    Recall that the PPT distillable entanglement $E_\mathrm{d}^\mathrm{PPT}(\rho)$ of any state cannot be increased by applying a PPT operation, and that it is superadditive $E_{\mathrm{d}}^\mathrm{PPT}(\rho^{S_1 S_2}) \geq E_{\mathrm{d}}^\mathrm{PPT}(\rho^{S_1}) + E_{\mathrm{d}}^\mathrm{PPT}(\rho^{S_2})$.
    Since $\rho$ is a PPT state, the PPT distillable entanglement of $\tau$ is the same as the PPT distillable entanglement of $\rho \otimes \tau$, because preparing $\rho$ is a PPT operation.
    By using the mentioned properties of PPT distillable entanglement, we have
        \begin{align}
            E_{\mathrm{d}}^\mathrm{PPT} \pqty{\tau^C}
            &= E_{\mathrm{d}}^\mathrm{PPT} \pqty{\rho^S \otimes \tau^C}
            \\
            &\geq E_{\mathrm{d}}^\mathrm{PPT} \pqty{\mu^{SC}} \nonumber
            \\
            &\geq E_{\mathrm{d}}^\mathrm{PPT} \pqty{\mu^S} + E_{\mathrm{d}}^\mathrm{PPT} \pqty{\mu^C} \nonumber
            \\
            &= E_{\mathrm{d}}^\mathrm{PPT} \pqty{\mu^S} + E_{\mathrm{d}}^\mathrm{PPT} \pqty{\tau^C}. \nonumber
        \end{align}
    Therefore, we conclude that $E_{\mathrm{d}}^\mathrm{PPT} \pqty{\mu^S} = 0$.
    Recalling that all non-PPT states are distillable~\cite{Eggeling_2001}, we conclude that $\mu^S$ must be a PPT state.
\end{proof}

\section{General quantum resource theories}

We will now show a generalized version of Proposition~\ref{prop:PureTarget} for general quantum resource theories.
This, under some assumptions, will show an equivalence between asymptotic rates and marginal asymptotic rates in general resource theories.

Any quantum resource theory is defined by a set of free states ($\mathcal{F}$) and a set of free operations ($\mathcal{O}$) \cite{RevModPhys.91.025001}, such that the following property holds
\begin{equation}
    \Lambda_f(\rho_f)\in\mathcal{F}\,\,\forall\,\,\rho_f\in\mathcal{F}\,\,\text{and}\,\,\Lambda_f\in\mathcal{O}.
\end{equation}
With this introduction, we now state our result. As a notation, we denote $R(\rho\rightarrow \sigma)$ as $R_{\sigma}(\rho)$.
\begin{prop} \label{prop:resource}
For any initial state $\rho$ and a target state $\sigma$, 
$R_{\mathrm{m}}(\rho\rightarrow \sigma)=R(\rho\rightarrow \sigma)$ if the following conditions hold
\begin{itemize}
    \item $R_{\sigma}(\mu^{SS'})\geq R_{\sigma}(\mu^{S})+R_{\sigma}(\mu^{S'})$ for all $\mu^{SS'}$ (super-additivity).
\item $R_{\sigma}(\cdot)$ is lower semi-continuous at $\sigma$.
\end{itemize}

\end{prop}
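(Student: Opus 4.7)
The plan is to adapt the argument of Proposition~\ref{prop:PureTarget} to the abstract setting, replacing the ratio $E_\mathrm{d}(\cdot)/S(\phi^A)$ by the operational quantity $R_\sigma(\cdot) = R(\cdot \to \sigma)$. The direction $R_\mathrm{m}(\rho\to\sigma) \geq R(\rho\to\sigma)$ follows immediately, since any standard transformation protocol is a fortiori a marginal one: if $\|\Lambda(\rho^{\otimes n}) - \sigma^{\otimes m}\|_1 < \varepsilon$, monotonicity of the trace distance under partial trace gives $\|\mu^{S_i} - \sigma\|_1 < \varepsilon$ for every $i \leq m$.

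For the nontrivial converse, I would fix any achievable marginal rate $r < R_\mathrm{m}(\rho\to\sigma)$ and aim to show that $r$ is also an achievable standard rate. Noting that $R_\sigma(\sigma) = 1$ (trivially, by the identity protocol), lower semi-continuity at $\sigma$ supplies, for every $\varepsilon > 0$, some $\eta > 0$ such that $\|\nu - \sigma\|_1 < \eta$ forces $R_\sigma(\nu) > 1 - \varepsilon$. Feeding this $\eta$ together with an arbitrary $\delta > 0$ into the definition of $r$ yields integers $m, n$ and a free operation $\Lambda$ producing $\mu = \Lambda(\rho^{\otimes n})$ with $\|\mu^{S_i} - \sigma\|_1 < \eta$ for all $i \leq m$ and $m/n + \delta > r$.

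Iterating the super-additivity hypothesis across the $m$ factors then gives $R_\sigma(\mu^{S_1\ldots S_m}) \geq \sum_{i=1}^{m} R_\sigma(\mu^{S_i}) > m(1-\varepsilon)$. The closing step is concatenation of rates: since $\mu^{S_1\ldots S_m}$ is produced from $n$ copies of $\rho$ by a single free operation, applying $\Lambda$ independently to $N$ disjoint blocks and then running an optimal asymptotic protocol for $\mu^{S_1\ldots S_m}\to\sigma$ realizes the transformation $\rho\to\sigma$ at rate at least $R_\sigma(\mu^{S_1\ldots S_m})/n \geq (m/n)(1-\varepsilon) > (r-\delta)(1-\varepsilon)$. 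Letting $\varepsilon,\delta\to 0$ and then taking the supremum over feasible $r$ yields $R(\rho\to\sigma) \geq R_\mathrm{m}(\rho\to\sigma)$, and combined with the easy direction closes the argument.

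I expect the main technical obstacle to lie in the concatenation step rather than in the algebra: one must compose the fixed free operation $\Lambda$, carrying a finite error $\eta$, with an asymptotic protocol for $\mu^{S_1\ldots S_m}\to\sigma$ that is itself defined only up to $(\varepsilon,\delta)$ approximations, and track how these two layers of approximation interact as the number of blocks $N$ grows and the accuracy of the inner protocol is pushed up. This is the abstract analogue of the rate composition manipulations implicit in Proposition~\ref{prop:PureTarget}, and should be routine provided the set of free operations is closed under composition and tensor products, which is a standard structural assumption on $\mathcal{O}$.
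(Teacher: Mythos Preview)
Your proposal is correct and follows essentially the same strategy as the paper: lower semi-continuity gives $R_\sigma(\mu^{S_i}) > 1-\varepsilon$ on each marginal, super-additivity sums these to $R_\sigma(\mu^{S_1\ldots S_m}) > m(1-\varepsilon)$, and then one compares with $n R_\sigma(\rho)$. The ``concatenation'' step you flag as the main obstacle is not one: it is precisely the conjunction of additivity $R_\sigma(\rho^{\otimes n}) = n R_\sigma(\rho)$ and monotonicity $R_\sigma(\rho^{\otimes n}) \geq R_\sigma(\Lambda(\rho^{\otimes n}))$, both of which hold automatically for an asymptotic transformation rate and require no further error tracking --- the paper simply invokes them as such and bypasses the operational rephrasing entirely.
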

\begin{proof}
Note that, by definition $R_{\mathrm m}(\rho\rightarrow \sigma)\geq R_{\sigma}(\rho)$. Therefore it is enough to show that $R_{\mathrm{m}}(\rho\rightarrow \sigma)\leq R(\rho\rightarrow \sigma)$.

Since, $R_{\sigma}(\cdot)$ is lower semi-continuous at $\sigma$, there exists a small enough $\varepsilon'$, such that for every $\tau$ satisfying $\left\Vert \tau-\sigma\right\Vert _{1} \leq\varepsilon'$, we have
\begin{eqnarray}\label{lower_semi}
    R_{\sigma}(\tau)\geq R_{\sigma}(\sigma)-f(\varepsilon')=1-f(\varepsilon').
\end{eqnarray}
Here, $f(\varepsilon')\rightarrow 0 $ as $\varepsilon'\rightarrow 0^+$. Here, marginal asymptotic transformations can be defined analogous to Eq. (\ref{marginal_rate}), by replacing the LOCC operation $\Lambda$, by a free operation.

Consider now a free operation ($\Lambda$) achieving Eq. (\ref{marginal_rate}). Applying $R_{\sigma}(\cdot)$, on both sides of Eq. (\ref{marginal_rate_1}) we get
\begin{align}
R_{\sigma}\left(\rho^{\otimes n}\right)=nR_{\sigma}\left(\rho\right) & \geq R_{\sigma}\left(\mu^{S_{1}\ldots S_{m}}\right)\\
 & \geq\sum_{i=1}^{m}R_{\sigma}\left(\mu^{S_{i}}\right)\nonumber \\
 & \geq m\left(R_{\sigma}\left(\sigma\right)-f(\varepsilon)\right)\nonumber \\
 & =m\left(1-f(\varepsilon)\right).\nonumber 
\end{align}
Here, we used additivity, monotonicity, super-additivity (assumed) and lower semi-continuity (assumed) of $R_{\sigma}(\cdot)$. This implies 
\begin{equation}
\frac{m}{n}\leq\frac{R_{\sigma}\left(\rho\right)}{1-f(\varepsilon)}.
\end{equation}
Using Eqs.~(\ref{delta}) we arrive at 
\begin{equation}
r<\frac{R_{\sigma}\left(\rho\right)}{1-f(\varepsilon)}+\delta.
\end{equation}
Recalling that $\varepsilon,\delta > 0$ can be chosen arbitrarily small, we conclude that $R_{\mathrm{m}}(\rho\rightarrow \sigma)\leq R(\rho\rightarrow \sigma)$.
This completes the proof. 
\end{proof}

\bibliography{literature}